\documentclass[11pt,a4paper]{article}

\usepackage[T1]{fontenc}
\usepackage{lmodern}
\usepackage{microtype}
\usepackage[margin=27mm]{geometry}
\usepackage{amsmath,amssymb,amsthm}
\usepackage{graphicx}
\usepackage{enumitem}
\usepackage{placeins}
\usepackage{needspace}
\usepackage[font=small,labelfont=bf,labelsep=period]{caption}
\usepackage{hyperref}

\hypersetup{
  hidelinks,
  pdftitle={Exact Reachability by Positive Vertex-Centroid Moves},
  pdfauthor={Yongjie Guan},
  pdfsubject={Computational geometry and exact reachability},
  pdfkeywords={vertex-centroid moves, exact reachability, polygon transformations, computational geometry}
}
\graphicspath{{figures/}}

\newtheorem{theorem}{Theorem}
\newtheorem{proposition}[theorem]{Proposition}
\newtheorem{lemma}[theorem]{Lemma}
\newtheorem{corollary}[theorem]{Corollary}
\theoremstyle{definition}

\theoremstyle{remark}
\newtheorem{remark}[theorem]{Remark}

\newcommand{\R}{\mathbb{R}}
\newcommand{\one}{\mathbf{1}}
\newcommand{\cM}{\mathcal{M}}
\newcommand{\cS}{\mathcal{S}}
\newcommand{\cE}{\mathcal{E}}
\newcommand{\fg}{\mathfrak{g}}
\DeclareMathOperator{\rank}{rank}
\DeclareMathOperator{\Ad}{Ad}
\DeclareMathOperator{\Lie}{Lie}

\title{Exact Reachability by Positive Vertex--Centroid Moves}
\author{Yongjie Guan\\
Zhejiang University of Technology}
\date{}

\begin{document}
\maketitle

\begin{abstract}
We resolve Problem~60 of The Open Problems Project under the natural
convention that a selected vertex at the total centroid is immobile.  A
cyclically labeled planar $n$-vertex configuration, $n\ge3$, can be
transformed exactly into a regular $n$-gon by finitely many vertex--centroid
moves if and only if it is noncollinear.  The affirmative direction is
independent of this convention, and all moves may be chosen positive, so the
selected vertex never crosses the centroid of the other vertices.  More
generally, any two configurations in a connected open subset of the affinely
spanning configuration space can be joined by positive moves whose entire
continuous execution remains in that subset.  For simple labeled $n$-gons,
allowing straight-angle vertices, this yields exact reachability while
preserving simplicity precisely within each orientation class.  In
$\mathbb{R}^d$, it yields mutual reachability of all full-dimensional labeled
$n$-point configurations when $n\ge d+2$, while orientation is the only
obstruction when $n=d+1$.

Writing configurations as coordinate matrices, each move becomes a rank-one
perturbation of the identity.  Explicit one-move curves and three-move
conjugation words yield a finite-word endpoint map with invertible differential
at every affinely spanning configuration.  The inverse function theorem gives
exact local reachability with all intermediate states confined to a prescribed
neighborhood, and connectedness makes it global.  An elementary ear-reduction
argument establishes the required connectivity of simple-polygon orientation
classes.  The same algebra determines exactly the group generated by positive
moves.  The proof is existential and nonquantitative.
\end{abstract}

\section{Introduction}

Problem~60 of The Open Problems Project asks whether a polygon can be made
regular by finitely many moves, each translating one vertex along the line
joining it to the centroid of all vertices \cite{TOPP60}.  The entry gives the
triangle case and notes that quadrilaterals are already unclear.  When the
selected vertex coincides with the total centroid, the joining line is
undefined; throughout we use the natural convention that the vertex is then
immobile.  Under this convention we prove that transformation to a regular
polygon is possible exactly for noncollinear cyclic vertex configurations.  All
affirmative constructions use the positive moves defined below---a strict
subclass of the translations permitted in Problem~60---and require no
nontrivial move at coincidence.  For simple polygons, defined below as
straight-line embedded cycles that may have straight-angle vertices, we
classify exact reachability while preserving simplicity: the only obstruction
is orientation.  Collinear configurations are excluded under the stated
convention because moves cannot increase affine dimension, while positive moves
preserve it throughout their execution.

The proof separates local algebra from global topology.  On coordinate
matrices, each move is right multiplication by the rank-one perturbation
$A_i(\lambda)$.  For $i\ne j$, the word
\[
  A_i(a)A_j(e^t)A_i(a^{-1}), \qquad a>0,\quad a\ne1,
\]
passes through the identity at $t=0$.  Sums and differences of its derivatives
for $a$ and $a^{-1}$, together with the one-move directions, span every
direction allowed by the affine coordinate constraint.  A finite endpoint map
and the inverse function theorem then give exact local reachability while all
constituent moves remain in any prescribed neighborhood.  Local reachability
makes the classes open, positive inverses make the relation symmetric, and
connectedness gives global reachability.

The algebraic argument first classifies full-dimensional configurations in
every dimension.  For simple polygons, an elementary ear reduction then proves
that each orientation class is connected.  Appendix~\ref{app:move-group}
records the resulting exact move group, and Appendix~\ref{app:lie} gives the
Lie-theoretic interpretation.  Hawley circulated an unpublished constructive
manuscript \cite{Hawley2014}, which is discussed below.

\section{Moves, configurations, and main statements}

Let $d\ge1$, and let $P=(p_1,\ldots,p_n)$ be a labeled configuration in
$\R^d$, with $n\ge d+1$.  Write
\[
  m(P)=\frac1n\sum_{j=1}^n p_j
\]
for its centroid.  If vertex $i$ is selected, let
\[
  c_i(P)=\frac1{n-1}\sum_{j\ne i}p_j
\]
be the centroid of the other vertices.  Since
\[
  m(P)=c_i(P)+\frac1n\bigl(p_i-c_i(P)\bigr),
\]
whenever $p_i\ne m(P)$ the line through $p_i$ and $m(P)$ is also the line
through $p_i$ and $c_i(P)$, and every point on it has the form
\begin{equation}\label{eq:move-endpoint}
  p_i'=c_i(P)+\lambda\bigl(p_i-c_i(P)\bigr),
  \qquad \lambda\in\R.
\end{equation}
We call the move \emph{positive} when $\lambda>0$.  The identity corresponds
to $\lambda=1$.  Here positive refers only to the scalar $\lambda$; it does
not assert entrywise positivity of a matrix or motion toward the total
centroid.

We adopt the following coincidence convention throughout: if
$p_i=c_i(P)$, equivalently $p_i=m(P)$, only the identity move is allowed.  If
$p_i\ne c_i(P)$ and $\lambda>0$, the move is executed along the straight path
\begin{equation}\label{eq:straight-execution}
  p_i(s)=c_i(P)+\theta_\lambda(s)\bigl(p_i-c_i(P)\bigr),
  \qquad
  \theta_\lambda(s)=(1-s)+s\lambda,
  \qquad 0\le s\le1.
\end{equation}
Because $\theta_\lambda(s)>0$, the moving vertex never reaches $c_i(P)$.
Its instantaneous centroid is
\[
  m(s)=c_i(P)+\frac{\theta_\lambda(s)}n
                    \bigl(p_i-c_i(P)\bigr),
\]
and therefore
\[
  p_i(s)-m(s)
  =\left(1-\frac1n\right)\theta_\lambda(s)
     \bigl(p_i-c_i(P)\bigr)\ne0.
\]
Thus the line through the moving vertex and the instantaneous centroid remains
well defined during every nontrivial positive move.
The inverse move has parameter $\lambda^{-1}>0$ and traces the same path in the
opposite direction.  Indeed, if $q_i=c_i+\lambda(p_i-c_i)$ is the endpoint,
then the inverse execution at time $u$ is
\[
  c_i+\bigl((1-u)\lambda+u\bigr)(p_i-c_i)=p_i(1-u).
\]

Figure~\ref{fig:positive-coincidence} summarizes the ray geometry and the
coincidence convention.

\begin{figure}[t]
  \centering
  \includegraphics[width=\linewidth]{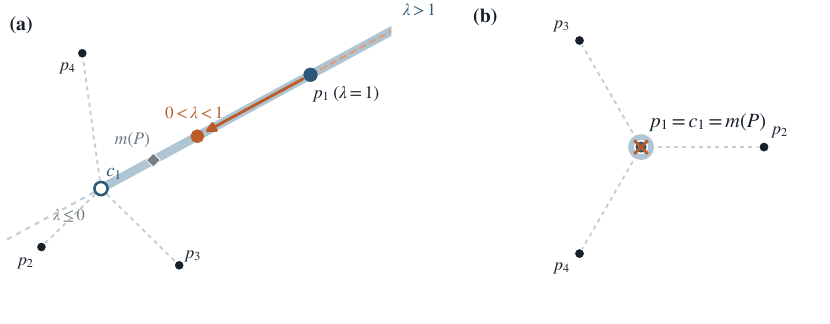}
  \caption{Geometry of a positive vertex--centroid move for $n=4$ and $i=1$.
  \textup{(a)} With the other vertices fixed, $p_1$ moves on the open ray from
  $c_1(P)$: $0<\lambda<1$ moves inward, $\lambda=1$ is the identity, and
  $\lambda>1$ moves outward; positive moves exclude $\lambda\le0$. The diamond
  marks $m(P)$. \textup{(b)} At coincidence, $p_1=c_1(P)=m(P)$; under our
  convention the selected vertex is immobile.}
  \label{fig:positive-coincidence}
\end{figure}
\FloatBarrier

Regard $P$ as the $d\times n$ matrix whose columns are its points, and set
\[
  \widehat P=
  \begin{pmatrix}
    P\\
    \one^{\mathsf T}
  \end{pmatrix},
  \qquad
  \cM_{d,n}
  =\{P\in M_{d,n}(\R):\rank\widehat P=d+1\}.
\]
Thus $\cM_{d,n}$ is the open set of configurations that affinely span
$\R^d$.  When $n=d+1$, define the orientation of the labeled simplex
$P=(p_1,\ldots,p_{d+1})$ by
\begin{equation}\label{eq:simplex-orientation}
  \operatorname{or}(P)
  =\operatorname{sgn}\det(p_2-p_1,\ldots,p_{d+1}-p_1).
\end{equation}

For a planar cyclically labeled configuration, put $p_{n+1}=p_1$ and define
\begin{equation}\label{eq:signed-area}
  \mathcal A(P)=\frac12\sum_{i=1}^n\det(p_i,p_{i+1}).
\end{equation}
A \emph{simple cyclically labeled polygon} is a straight-line embedding of the
labeled cycle $p_1p_2\cdots p_n p_1$.  This definition allows a straight angle:
if $p_i$ lies strictly between its two neighbors, the two incident edges still
meet only at their common endpoint.  By the Jordan curve theorem, the boundary
of a simple polygon encloses a nonempty bounded region.  The polygonal area
formula identifies $|\mathcal A(P)|$ with the Euclidean area of that region.
Hence $\mathcal A(P)\ne0$, and the orientation of $P$ is
\[
  \operatorname{or}(P)=\operatorname{sgn}\mathcal A(P)\in\{+1,-1\}.
\]

The following theorem is the dimension-independent core of the argument.

\begin{theorem}[Reachability in connected open constraints]
\label{thm:open-constraints}
Let $d\ge1$, $n\ge d+1$, and let $\Omega$ be a connected open subset of
$\cM_{d,n}$.  Any two configurations in $\Omega$ can be joined by finitely
many positive vertex--centroid moves whose entire continuous execution remains
in $\Omega$.
\end{theorem}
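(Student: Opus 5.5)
The plan follows the outline in the introduction: prove exact local reachability with controlled execution, then globalize by connectedness.

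\textbf{Step 1 (matrix form).} A move at vertex $i$ with parameter $\lambda$ is $P\mapsto PA_i(\lambda)$, where
\[
A_i(\lambda)=I+(\lambda-1)(e_i-u_i)e_i^{\mathsf T},\qquad u_i=\tfrac1{n-1}\textstyle\sum_{j\ne i}e_j .
\]
Since $\one^{\mathsf T}(e_i-u_i)=0$, we have $\one^{\mathsf T}A_i(\lambda)=\one^{\mathsf T}$, so $\widehat{PA}=\widehat P A$ and $\det A_i(\lambda)=\lambda\neq0$. Hence positive moves preserve $\cM_{d,n}$, and $A_i(\lambda)^{-1}=A_i(\lambda^{-1})$. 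The straight execution \eqref{eq:straight-execution} is $PA_i(\theta_\lambda(s))$. Therefore, if $\lambda$ is close to $1$ (resp.\ a word's parameters are close to their base values), the whole execution stays in any prescribed neighborhood of $P$, by continuity of $(P,\lambda)\mapsto PA_i(\lambda)$ on compact parameter sets.

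\textbf{Step 2 (local surjectivity).} Let $G$ be the group $\{A\in GL_n(\R):\one^{\mathsf T}A=\one^{\mathsf T}\}$. Its tangent space at $I$ is $\{X:\one^{\mathsf T}X=0\}$, of dimension $n(n-1)$. One-move curves $A_i(e^t)$ give the tangents $(e_i-u_i)e_i^{\mathsf T}$. The conjugation word
\[
W_{ij}^a(t)=A_i(a)A_j(e^t)A_i(a^{-1})
\]
has derivative at $t=0$ equal to $A_i(a)\,(e_j-u_j)e_j^{\mathsf T}A_i(a)^{-1}$. Expanding in $a-1$ and forming sums and differences for $a$ and $a^{-1}$, as the introduction suggests, isolates brackets of the form $[(e_i-u_i)e_i^{\mathsf T},(e_j-u_j)e_j^{\mathsf T}]$ plus correction terms. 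The main computation is to check that these, together with the one-move tangents, span $\{X:\one^{\mathsf T}X=0\}$. I expect this rank verification to be the main obstacle. I would first try to show directly that the vectors $A_i(a)(e_j-u_j)$ and the rows $e_j^{\mathsf T}A_i(a)^{-1}$ for one fixed $a\ne1$ generate all elementary directions $(e_k-e_l)e_m^{\mathsf T}$. Since the map $X\mapsto \widehat P X$ restricted to $\{\one^{\mathsf T}X=0\}$ has image $\{Y\in M_{d+1,n}:\text{last row }0\}\cong M_{d,n}$ when $\rank\widehat P=d+1$, full-rank $P$ makes $X\mapsto PX$ surjective onto $M_{d,n}$.

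Concatenate enough such words into a word $\Phi(\tau)=A(\tau)$ with $\Phi(0)=I$, where the parameters $\tau\in\R^N$ enter as $e^{\tau_k}$ in the middle slots. Choose $d n$ of them so that $\tau\mapsto P\Phi(\tau)$ has invertible differential at $0$. Each $W^a_{ij}(0)=I$, so at $\tau=0$ the word is the identity, but its execution passes through $PA_i(a)$, which is not near $P$. To keep executions inside a small ball, take $a$ close to $1$. Spanning for all $a$ near $1$ except $a=1$ should follow from the first-order expansion, because the bracket terms appear at order $a-1$ and a determinant that is a nonzero polynomial in $a-1$ is nonzero for small $a\ne1$. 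The inverse function theorem then shows that every $Q$ in a small neighborhood of $P$ equals $P\Phi(\tau)$ with $\tau$ small. All parameters are then positive and close to $1$ or $a^{\pm1}$, so every intermediate execution lies in any prescribed neighborhood $U\subset\Omega$ of $P$.

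\textbf{Step 3 (globalization).} Define $P\sim Q$ in $\Omega$ if $Q$ is reachable from $P$ with execution in $\Omega$. Positive inverses retrace the same paths, so $\sim$ is symmetric; it is transitive by concatenation. By Step 2 each class is open, so the classes partition $\Omega$ into open sets. Since $\Omega$ is connected, there is a single class, which proves the theorem.
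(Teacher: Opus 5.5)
Your architecture matches the paper's: matrix form of moves, conjugation words $A_i(a)A_j(e^t)A_i(a^{-1})$ with $a$ near $1$, surjectivity of $Y\mapsto PY$ from $\fg=\{Y:\one^{\mathsf T}Y=0\}$ onto $M_{d,n}(\R)$, the inverse function theorem for a finite endpoint map, and open reachability classes plus connectedness. Steps 1 and 3 and your full-rank argument are fine.

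The gap is the step you flag yourself. You never show that the derivatives of the finite words span $\fg$, and this is the one genuinely nontrivial ingredient. Moreover, the justification you sketch for small $a\ne1$ fails. Write $w_i=e_i-u_i$, $X_i=w_ie_i^{\mathsf T}$, $a=1+\varepsilon$, and let $Y^{\pm}_{ij}$ be the derivatives at $t=0$ of $A_i(a^{\pm1})A_j(e^t)A_i(a^{\mp1})$. Then $Y^{\pm}_{ij}=X_j\pm\varepsilon[X_i,X_j]+O(\varepsilon^2)$. First-order information can therefore certify at most the span of the $X_k$ and the brackets $[X_i,X_j]$, which has dimension at most $n+\binom n2$. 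That is strictly less than $\dim\fg=n(n-1)$ once $n\ge4$. The symmetric directions enter only at order $\varepsilon^2$; they correspond to double brackets $[X_i,[X_i,X_j]]$. Indeed, the $2\times2$ determinant governing $Y^+_{ij},Y^-_{ij}$ modulo $X_i,X_j$ is $\alpha^2(a-a^{-1})(a+a^{-1}-2)=O(\varepsilon^3)$. So an argument from the bracket terms at order $a-1$ cannot show that your determinant is a nonzero polynomial.

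The fix is an exact computation rather than an expansion. For $i\ne j$ put $\alpha=e_i^{\mathsf T}w_j=-1/(n-1)$ and $F_{ij}=w_ie_j^{\mathsf T}$. Then $X_i^2=X_i$, $X_iX_j=\alpha F_{ij}$, $X_jX_i=\alpha F_{ji}$, and $X_iX_jX_i=\alpha^2X_i$, so
\[
  A_i(a)X_jA_i(a^{-1})=X_j+\alpha(a-1)F_{ij}+\alpha(a^{-1}-1)F_{ji}+\alpha^2(a-1)(a^{-1}-1)X_i .
\]
Subtracting its $a\leftrightarrow a^{-1}$ counterpart gives $\alpha(a-a^{-1})(F_{ij}-F_{ji})$. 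Adding the two and removing $2X_j$ and $2\alpha^2(a-1)(a^{-1}-1)X_i$ gives $\alpha(a+a^{-1}-2)(F_{ij}+F_{ji})$. Both coefficients are nonzero for every $a>0$, $a\ne1$.

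Together with $X_i=F_{ii}$ this yields every $F_{ij}$. These span $\fg$ because $w_1,\dots,w_n$ span $\ker\one^{\mathsf T}$. Spanning thus holds for every $a\ne1$, so $a$ can be taken as close to $1$ as the neighborhood $U$ requires, with no genericity argument.

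You should also add one observation. A factor reached when the selected vertex sits at the centroid acts trivially, since $Pw_i=0$ gives $PA_i(\lambda)=P$. It can be dropped without changing any later state, so the executed word consists only of legal moves under the coincidence convention.
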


Applied to simple polygons, it yields the main strengthening.

\Needspace{10\baselineskip}
\begin{theorem}[Simplicity-preserving exact reachability]
\label{thm:simple-main}
Let $n\ge3$, and let $P$ and $Q$ be simple cyclically labeled $n$-gons in
$\R^2$, in the sense defined above; in particular, straight-angle vertices are
allowed.  There is a finite sequence of positive vertex--centroid moves that
transforms $P$ exactly into $Q$ and keeps the polygon simple throughout the
continuous execution of every move if and only if
\[
  \operatorname{or}(P)=\operatorname{or}(Q).
\]
\end{theorem}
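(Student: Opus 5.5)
The plan is to reduce Theorem~\ref{thm:simple-main} to Theorem~\ref{thm:open-constraints} with $d=2$. For $\sigma\in\{+1,-1\}$ let $\cS_\sigma\subset M_{2,n}(\R)$ be the set of simple cyclically labeled $n$-gons of orientation $\sigma$. The work then splits into three facts about these sets, followed by the two directions of the equivalence.

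\textbf{Openness.} Each $\cS_\sigma$ is an open subset of $\cM_{2,n}$. A simple polygon is not collinear, since it encloses a region of nonzero area, so $\cS_\sigma\subset\cM_{2,n}$. Simplicity is an open condition. Nonadjacent closed edges are disjoint compact sets and stay disjoint under small perturbation. Adjacent edges $[p_{i-1},p_i]$ and $[p_i,p_{i+1}]$ meet only at $p_i$ exactly when they do not overlap, that is, unless $p_{i-1}$, $p_i$, $p_{i+1}$ are collinear with $p_{i\pm1}$ on the same side of $p_i$. This condition also persists under small perturbation; straight angles are allowed but are not the degenerate case. Finally $\operatorname{sgn}\mathcal A$ is locally constant on simple polygons, because $\mathcal A$ is continuous and never vanishes there.

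\textbf{Sufficiency.} Once each $\cS_\sigma$ is known to be connected, Theorem~\ref{thm:open-constraints} applied to $\Omega=\cS_\sigma$ gives finitely many positive moves from $P$ to $Q$ whose whole continuous execution stays in $\cS_\sigma$. The polygon is therefore simple throughout.

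\textbf{Necessity.} Suppose a sequence of moves keeps the polygon simple throughout its continuous execution. Each execution path~\eqref{eq:straight-execution} is continuous in the configuration, so concatenating them gives a continuous path of simple polygons from $P$ to $Q$. Since $\mathcal A$ never vanishes along this path, $\operatorname{sgn}\mathcal A$ is constant on it, and $\operatorname{or}(P)=\operatorname{or}(Q)$.

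\textbf{Connectedness: the main obstacle.} I expect the hard step to be showing that each $\cS_\sigma$ is path-connected. I would argue by induction on $n$ via ear reduction.
\begin{enumerate}[label=(\roman*)]
\item The base case $n=3$: $\cS_\sigma$ is the set of labeled triangles of orientation $\sigma$. This set is connected because $GL_2^+$ is connected; move the edge vectors $(p_2-p_1,\,p_3-p_1)$ within $GL_2^+$ and translate $p_1$.
\item For $n\ge4$, the two-ears theorem (via a triangulation, or directly from a diagonal) gives a vertex $p_i$ such that the segment $[p_{i-1},p_{i+1}]$ lies in the closed polygon and meets the boundary only at its endpoints. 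Straight-angle vertices need care. I would first slide any straight-angle vertex slightly into the interior of the region, which stays within $\cS_\sigma$ by openness. Then all vertices are strictly convex or reflex and the standard ear argument applies.
\item Slide $p_i$ along a segment toward the midpoint of $[p_{i-1},p_{i+1}]$. The triangle $p_{i-1}p_ip_{i+1}$ contains no other boundary points, so the polygon stays simple throughout. The endpoint is a polygon whose vertex $p_i$ is a straight angle, and it is still in $\cS_\sigma$.
\item Such polygons correspond to simple $(n-1)$-gons with a marked edge on which $p_i$ sits. By induction, applied to the $(n-1)$-gon obtained by deleting $p_i$, any two such polygons can be connected in $\cS_\sigma$. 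Along the path we carry the midpoint of the edge $p_{i-1}p_{i+1}$, which remains simple. A cyclic relabeling issue arises because the ear index $i$ varies between polygons. To handle it, I would connect every polygon to a fixed convex representative of orientation $\sigma$: first reduce to a convex $n$-gon with the same labels, then note that the convex labeled $n$-gons of a fixed orientation form a connected set. That set is star-shaped around a regular polygon under linear interpolation of vertices in a common cyclic order.
\end{enumerate}
Combining these steps shows $\cS_\sigma$ is connected, which completes the proof.
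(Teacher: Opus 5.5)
Your overall plan matches the paper's. You prove openness of each orientation class, get necessity from the sign of $\mathcal A$ along a path of simple polygons, prove connectedness by ear reduction, and then apply Theorem~\ref{thm:open-constraints} with $\Omega=\cS_n^\sigma$. Steps (i)--(iii) are sound. The gap is the final sentence of (iv), which is the only thing you offer for polygons collapsed at different ear indices.

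\textbf{The gap.} The set of convex labeled $n$-gons of fixed orientation is not star-shaped around a regular polygon $R$. In fact it is not star-shaped around any of its points. For any such $R$, the configuration $-R$ (rotation by $\pi$ about the origin) is again convex, with the same orientation and the same cyclic label order. Yet the vertexwise segment $(1-t)(-R)+tR=(2t-1)R$ collapses to the single point $0$ at $t=\tfrac12$. So vertexwise linear interpolation does not connect convex representatives, and as written you have not connected two polygons whose ears have different indices.

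\textbf{How the paper avoids it.} The relabeling issue disappears if the convex target is allowed to depend on the ear index.
\begin{enumerate}[label=(\alph*)]
\item Fix one strictly convex (for example, regular) $n$-gon $Q$ of orientation $\sigma$.
\item After collapsing the ear $p_i$ of $P$, delete $q_i$ from $Q$. The result $Q^-$ is a strictly convex $(n-1)$-gon of the same orientation; relabel it compatibly with $P^-$.
\item Use your inductive hypothesis (path-connectedness of $\cS_{n-1}^\sigma$) to join $P^-$ to $Q^-$, and lift by inserting midpoints.
\item Finally, move the inserted midpoint of $[q_{i-1},q_{i+1}]$ linearly out to $q_i$. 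This stays simple because strict convexity puts $q_i$ and every nonneighboring vertex in opposite open half-planes of the diagonal. The two moving edges therefore stay in the ear triangle, which meets the rest of the boundary only at $q_{i\pm1}$.
\end{enumerate}
Every $P$ then reaches the same $Q$, whatever $i$ is. This is why the paper's induction proves the stronger statement that every simple $n$-gon joins every strictly convex $n$-gon of the same orientation.

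\textbf{An alternative repair.} You could instead keep your route and connect convex polygons through star-shaped ones rather than convex ones. Interpolate the polar radii and angles of the vertices about an interior point. This keeps consecutive angular gaps in $(0,\pi)$, each edge stays in its own angular sector, and so simplicity is preserved. Some such argument is needed in place of the star-shapedness claim.
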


\begin{corollary}[Complete solution to TOPP Problem 60]\label{cor:topp}
Under the coincidence convention above, for every $n\ge3$, a cyclically
labeled planar $n$-vertex configuration can be transformed exactly into a
regular $n$-gon by finitely many vertex--centroid moves if and only if it is
noncollinear.  The affirmative direction is independent of the convention, and
all moves may be chosen positive.  In such a sequence, any regular $n$-gon may
be prescribed as the target when $n\ge4$; when $n=3$, the target may be any
prescribed regular triangle with the same orientation.  If the initial polygon
is simple in the sense above and the target has the same orientation, the
polygon can be kept simple throughout the continuous execution of every move.
\end{corollary}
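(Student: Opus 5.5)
The corollary will be derived from Theorems~\ref{thm:open-constraints} and~\ref{thm:simple-main} together with two elementary invariance facts for the converse. I would organize it in four steps.

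\emph{Necessity.} Suppose $P$ is collinear, so its points lie on a line $L$ (possibly all coincident). Under the coincidence convention, a move either fixes $P$ or replaces $p_i$ by a point on the line through $p_i$ and $c_i(P)$. Both points lie in $L$, since $c_i(P)$ is an average of points of $L$. Hence every move keeps the configuration inside $L$, and by induction every reachable configuration is collinear. A regular $n$-gon is not collinear, so it cannot be reached. Note that this uses arbitrary real $\lambda$, not only positive ones, which is why the statement distinguishes the two directions.

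\emph{Sufficiency for $n\ge4$.} Take $d=2$ and $\Omega=\cM_{2,n}$. For $n\ge4=d+2$, I claim this set is connected. Its complement in $M_{2,n}(\R)\cong\R^{2n}$ is the collinear locus, the set where $\rank\widehat P\le2$. This locus is a real algebraic set of dimension $n+2$: choose a line (2 parameters) and then $n$ positions on it. Its codimension is therefore $2n-(n+2)=n-2\ge2$. Removing a closed set of codimension at least $2$ from $\R^{2n}$ leaves a connected set. To make this rigorous, I would stratify the locus as a finite union of submanifolds of codimension $\ge2$ and use a transversality argument: a path avoiding a codimension-$\ge2$ submanifold can be obtained by generic perturbation. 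Alternatively, I would join any configuration to the regular one by an explicit path that moves vertices one at a time while keeping three noncollinear points fixed.

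Theorem~\ref{thm:open-constraints} then gives positive moves from any noncollinear $P$ to any prescribed regular $n$-gon. Because every move is positive, no move is performed at coincidence, so the convention is never invoked and the affirmative direction does not depend on it.

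\emph{Sufficiency for $n=3$.} Here $\cM_{2,3}$ splits into two open components according to $\operatorname{or}$. Each component is connected, since it is homeomorphic to $\R^2\times GL_2^{\pm}$, and each half of $GL_2$ is connected. Applying Theorem~\ref{thm:open-constraints} on the component containing $P$ reaches any regular triangle of the same orientation.

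\emph{Simplicity clause.} If $P$ is simple and the target regular $n$-gon has the same orientation, then the target is itself simple. Theorem~\ref{thm:simple-main} then provides positive moves that keep the polygon simple throughout the continuous execution of every move.

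\emph{Main obstacle.} The only nontrivial step is the connectedness of $\cM_{2,n}$ for $n\ge4$. I would settle it by the explicit route. First, make $p_1,p_2,p_3$ noncollinear by moving one of them slightly along a path of noncollinear configurations; this is possible because $P$ is noncollinear, so some three of its points are affinely independent, and I would relabel the path steps accordingly. Then move each remaining point straight to its target while the noncollinear triple is held fixed. Finally, move the triple to its target positions inside the space of noncollinear triples, passing through a fourth point when an orientation flip is needed: that point stays off the relevant line, so the configuration remains noncollinear throughout.
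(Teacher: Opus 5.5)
Your argument is correct, and its skeleton matches the paper's. Necessity holds because no move can leave the line containing a collinear configuration; the paper phrases this as $\rank(\widehat R A_i(\lambda))\le\rank\widehat R$. Sufficiency comes from applying Theorem~\ref{thm:open-constraints} to a connected open $\Omega$, and the simplicity clause comes from Theorem~\ref{thm:simple-main}.

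Where you genuinely differ is in proving that $\cM_{2,n}$ is connected for $n\ge4$. The paper does this algebraically in Proposition~\ref{prop:M-components}. Lemma~\ref{lem:target-matrix} produces $A\in G^+$ with $\widehat PA=\widehat Q$, using the spare complementary basis vector available when $n\ge d+2$ to force $\det A>0$. Since $G^+\cong\mathrm{GL}^+(W)\ltimes W$ is path connected, $t\mapsto PA_t$ is the required path. That argument is uniform in $d$, gives Theorem~\ref{thm:full-dimensional} in every dimension, and lives in the same group that the positive moves generate (Appendix~\ref{app:move-group}).

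Your explicit path is more elementary and makes the role of $n\ge4$ geometrically visible. You anchor on an affinely independent triple, slide the other vertices straight to their targets, and, if the triple has the wrong orientation, let it pass through a collinear position while a fourth vertex stays off that line. Your spare vertex plays exactly the part of the paper's spare basis vector, and the idea extends to $\R^d$ with a $(d+1)$-point anchor.

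To make it airtight:
\begin{itemize}
\item Anchor directly on the independent triple. Moving a single point cannot make $p_1,p_2,p_3$ noncollinear if, say, $p_1=p_2=p_3$.
\item Before the flip, adjust the triple within its orientation class so that the fixed fourth vertex is off the line through two of the triple's points. Then push the third point across that line.
\end{itemize}
The codimension-$\ge2$ heuristic can be dropped; it would need a stratification argument that the explicit path avoids.

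One small slip: positivity of $\lambda$ does not by itself prevent a factor from being applied while the selected vertex sits at the centroid. Independence from the convention follows instead because such a factor acts as the identity. It can therefore be deleted without affecting any later state, which is the omission argument at the end of the proof of Proposition~\ref{prop:small-local}.
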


\section{Move matrices as rank-one perturbations}

Let $e_1,\ldots,e_n$ be the standard basis of $\R^n$, and define
\begin{equation}\label{eq:w-X-def}
  w_i=e_i-\frac1{n-1}\sum_{j\ne i}e_j,
  \qquad
  X_i=w_i e_i^{\mathsf T}.
\end{equation}
Then
\[
  Pw_i=p_i-c_i(P).
\]
For $\lambda\in\R$, set
\begin{equation}\label{eq:A-def}
  A_i(\lambda)=I+(\lambda-1)X_i.
\end{equation}
All columns of $PA_i(\lambda)$ except the $i$th agree with those of $P$, while
\[
  PA_i(\lambda)e_i
  =p_i+(\lambda-1)(p_i-c_i(P))
  =c_i(P)+\lambda(p_i-c_i(P)).
\]
Thus right multiplication by $A_i(\lambda)$ is exactly the move
\eqref{eq:move-endpoint}.

The vectors in \eqref{eq:w-X-def} satisfy
\[
  \one^{\mathsf T}w_i=0,
  \qquad
  e_i^{\mathsf T}w_i=1.
\]
Consequently,
\begin{equation}\label{eq:matrix-identities}
  \one^{\mathsf T}A_i(\lambda)=\one^{\mathsf T},
  \qquad
  X_i^2=X_i,
  \qquad
  A_i(\lambda)A_i(\mu)=A_i(\lambda\mu),
  \qquad
  \det A_i(\lambda)=\lambda.
\end{equation}
The determinant identity follows directly from the matrix determinant lemma:
\[
  \det\bigl(I+(\lambda-1)w_i e_i^{\mathsf T}\bigr)
  =1+(\lambda-1)e_i^{\mathsf T}w_i
  =\lambda.
\]
In particular, for $\lambda>0$,
\begin{equation}\label{eq:inverse-exponential}
  A_i(\lambda)^{-1}=A_i(\lambda^{-1}),
  \qquad
  A_i(\lambda)=\exp((\log\lambda)X_i).
\end{equation}
The exponential identity is convenient notation, but no Lie-theoretic result
will be used.

The straight execution of the move is the matrix path
\begin{equation}\label{eq:matrix-execution}
  P(s)=P A_i(\theta_\lambda(s)),
  \qquad 0\le s\le1.
\end{equation}
Since $A_i(\theta)$ is invertible whenever $\theta>0$,
\begin{equation}\label{eq:rank-preservation}
  \rank\widehat{P(s)}
  =\rank\bigl(\widehat P A_i(\theta_\lambda(s))\bigr)
  =\rank\widehat P.
\end{equation}
Positive moves therefore preserve affine dimension throughout their execution.

If $p_i=c_i(P)$, then $Pw_i=0$ and
\[
  PA_i(\lambda)=P
\]
for every $\lambda$.  Under the coincidence convention such a factor is
omitted whenever it occurs in a word of move matrices.  More generally, if
$R$ is the current configuration and $B$ is the product of all later factors,
then
\begin{equation}\label{eq:omit-coincident-factor}
  RA_i(\lambda)=R
  \quad\Longrightarrow\quad
  RA_i(\lambda)B=RB,
\end{equation}
so deleting the factor changes neither the current state nor any later state.

\section{Finite positive words span all directions}

Let
\begin{equation}\label{eq:W-g-def}
  W=\ker(\one^{\mathsf T}),
  \qquad
  \fg=\{Y\in M_n(\R):\one^{\mathsf T}Y=0\}.
\end{equation}
The columns of a matrix in $\fg$ lie in $W$.  For $1\le i,j\le n$, write
\begin{equation}\label{eq:F-def}
  F_{ij}=w_i e_j^{\mathsf T}.
\end{equation}
Then $X_i=F_{ii}$.  Moreover,
\begin{equation}\label{eq:w-centered}
  w_i=\frac n{n-1}\left(e_i-\frac1n\one\right).
\end{equation}
If $y=(y_1,\ldots,y_n)^{\mathsf T}\in W$, then
$y=\frac{n-1}{n}\sum_i y_iw_i$.  Thus $w_1,\ldots,w_n$ span $W$, and it
follows column by column that the matrices $F_{ij}$ span $\fg$.

We now recover these directions from derivatives of explicit finite move words.
Fix $a>0$ with $a\ne1$.  For $i\ne j$, define
\begin{align}
  C_{ij}^{+}(t;a)
    &=A_i(a)A_j(e^t)A_i(a^{-1}),\label{eq:Cplus}\\
  C_{ij}^{-}(t;a)
    &=A_i(a^{-1})A_j(e^t)A_i(a).\label{eq:Cminus}
\end{align}
Each curve is a word of three positive moves and satisfies
\[
  C_{ij}^{+}(0;a)=C_{ij}^{-}(0;a)=I.
\]
For $t\ne0$, however, the reciprocal $i$-moves need not cancel: the
intervening $j$-move changes the centroid used by the final $i$-move.
Figure~\ref{fig:conjugation-word} illustrates this mechanism.

\begin{figure}[t]
  \centering
  \includegraphics[width=\linewidth]{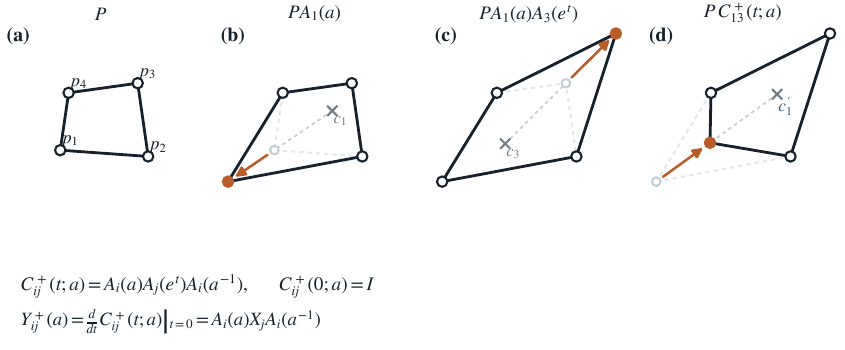}
  \caption{Geometry of $C^+_{ij}(t;a)$ for $n=4$, $i=1$, and $j=3$. Pale
  outlines show the preceding configuration. The middle $j$-move shifts the
  centroid used by the final $i$-move from $c_1$ to $c_1'$, so the reciprocal
  $i$-moves need not cancel geometrically. At $t=0$ the middle factor is the
  identity; the derivative below the panels is used in the calculation that
  follows.}
  \label{fig:conjugation-word}
\end{figure}
\FloatBarrier

\Needspace{11\baselineskip}
The next calculation is the key algebraic step.

\begin{lemma}[Finite-word spanning]\label{lem:finite-word-spanning}
For every $a>0$, $a\ne1$, the derivatives at $t=0$ of the curves
\[
  A_i(e^t),
  \qquad
  C_{ij}^{+}(t;a),
  \qquad
  C_{ij}^{-}(t;a)
  \quad (1\le i<j\le n)
\]
span $\fg$.
\end{lemma}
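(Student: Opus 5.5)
The plan is to compute each derivative explicitly as a combination of the matrices $F_{kl}=w_ke_l^{\mathsf T}$, and then solve for every $F_{ij}$. Since the text preceding the lemma already shows that the $F_{ij}$ span $\fg$, this suffices.

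For the one-move curves, \eqref{eq:inverse-exponential} gives $\frac{d}{dt}A_i(e^t)\big|_{t=0}=X_i=F_{ii}$, so all diagonal directions are obtained. For the conjugation words, the outer factors do not depend on $t$, so
\[
  \frac{d}{dt}C_{ij}^{+}(t;a)\Big|_{t=0}=A_i(a)\,X_j\,A_i(a^{-1}),
  \qquad
  \frac{d}{dt}C_{ij}^{-}(t;a)\Big|_{t=0}=A_i(a^{-1})\,X_j\,A_i(a).
\]
Write $\alpha=a-1$, $\beta=a^{-1}-1$ and $c=-\frac1{n-1}$. The definitions \eqref{eq:w-X-def} give, for $i\ne j$, $e_i^{\mathsf T}w_j=c$, hence $X_iw_j=c\,w_i$ and $e_j^{\mathsf T}X_i=c\,e_i^{\mathsf T}$. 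Expanding $(I+\alpha X_i)\,w_je_j^{\mathsf T}(I+\beta X_i)$ then yields
\[
  A_i(a)X_jA_i(a^{-1})
  =(w_j+\alpha c\,w_i)(e_j^{\mathsf T}+\beta c\,e_i^{\mathsf T})
  =F_{jj}+\alpha c\,F_{ij}+\beta c\,F_{ji}+\alpha\beta c^2F_{ii}.
\]
The same computation with $\alpha$ and $\beta$ exchanged gives the $C^-$ derivative.

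Subtracting the two derivatives gives $(\alpha-\beta)c\,(F_{ij}-F_{ji})$. Adding them and removing the diagonal terms $2F_{jj}+2\alpha\beta c^2F_{ii}$, which lie in the span of the one-move derivatives, gives $(\alpha+\beta)c\,(F_{ij}+F_{ji})$.

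The crux is that both coefficients are nonzero. First, $c\ne0$. Second, $\alpha-\beta=a-a^{-1}\ne0$ because $a>0$ and $a\ne1$. Third, $\alpha+\beta=a+a^{-1}-2=(a-1)^2/a\ne0$. Hence $F_{ij}$ and $F_{ji}$ both lie in the span for every $i<j$, so the restriction to $i<j$ in the lemma loses nothing. Together with the diagonal $F_{ii}$, the span contains every $F_{kl}$, and therefore equals $\fg$.

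The only real obstacle is the bookkeeping in the product expansion: getting the cross terms $X_iw_j$ and $e_j^{\mathsf T}X_i$ right, and checking that the symmetric combination does not degenerate. That degeneracy is exactly what the hypothesis $a\ne1$ excludes.
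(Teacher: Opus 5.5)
Your proposal is correct and follows the paper's proof essentially step for step. You expand $A_i(a)X_jA_i(a^{-1})$ into $F_{jj}+(a-1)cF_{ij}+(a^{-1}-1)cF_{ji}+(a-1)(a^{-1}-1)c^2F_{ii}$, then take the difference and the sum of the $C^{\pm}$ derivatives to isolate $F_{ij}-F_{ji}$ and $F_{ij}+F_{ji}$, with the nonzero coefficients $c(a-a^{-1})$ and $c(a-1)^2/a$.
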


\begin{proof}
The derivative of $A_i(e^t)$ at $0$ is $X_i$.  Fix $i\ne j$ and put
\[
  \alpha=e_i^{\mathsf T}w_j=e_j^{\mathsf T}w_i=-\frac1{n-1}.
\]
Since
\[
  X_i X_j=\alpha F_{ij},
  \qquad
  X_j X_i=\alpha F_{ji},
  \qquad
  X_i X_j X_i=\alpha^2 X_i,
\]
differentiating \eqref{eq:Cplus} at $t=0$ gives
\begin{align}
  Y_{ij}^{+}(a)
  &:={\left.\frac{d}{dt}\right|}_{t=0}C_{ij}^{+}(t;a)\notag\\
  &=A_i(a)X_jA_i(a^{-1})\notag\\
  &=X_j
    +\alpha(a-1)F_{ij}
    +\alpha(a^{-1}-1)F_{ji}
    +\alpha^2(a-1)(a^{-1}-1)X_i.\label{eq:Yplus}
\end{align}
Interchanging $a$ and $a^{-1}$ yields
\begin{align}
  Y_{ij}^{-}(a)
  &=X_j
    +\alpha(a^{-1}-1)F_{ij}
    +\alpha(a-1)F_{ji}
    +\alpha^2(a-1)(a^{-1}-1)X_i.\label{eq:Yminus}
\end{align}
Subtracting \eqref{eq:Yminus} from \eqref{eq:Yplus},
\begin{equation}\label{eq:antisymmetric-direction}
  Y_{ij}^{+}(a)-Y_{ij}^{-}(a)
  =\alpha(a-a^{-1})(F_{ij}-F_{ji}).
\end{equation}
Adding the two equations gives
\begin{align}
  &Y_{ij}^{+}(a)+Y_{ij}^{-}(a)-2X_j
     -2\alpha^2(a-1)(a^{-1}-1)X_i\notag\\
  &\hspace{38mm}
   =\alpha(a+a^{-1}-2)(F_{ij}+F_{ji}).
   \label{eq:symmetric-direction}
\end{align}
Both scalar coefficients on the right are nonzero: $a-a^{-1}\ne0$, and
\[
  a+a^{-1}-2=\frac{{(a-1)}^2}{a}>0.
\]
Hence the listed derivatives span both
$F_{ij}-F_{ji}$ and $F_{ij}+F_{ji}$, and therefore span $F_{ij}$ and
$F_{ji}$.  Varying the unordered pair $\{i,j\}$ recovers every off-diagonal
$F_{ij}$, while $X_i=F_{ii}$ supplies the diagonal terms.  Since these
matrices span $\fg$, the proof is complete.
\end{proof}

\begin{remark}\label{rem:no-commutator-limit}
Lemma~\ref{lem:finite-word-spanning} uses only derivatives of genuine finite
positive words.  No formal commutator, closure, or limiting sequence is used.
The parameter $a$ may be taken as close to $1$ as desired; this freedom is what
allows the same words to respect an arbitrary open geometric constraint.  No
uniform lower bound on the resulting reachable neighborhood is asserted as
$a\to1$.  More precisely, if $a=1+\varepsilon$, then
\[
  a-a^{-1}=2\varepsilon+O(\varepsilon^2),
  \qquad
  a+a^{-1}-2=\varepsilon^2+O(\varepsilon^3).
\]
Thus the antisymmetric directions in
\eqref{eq:antisymmetric-direction} appear at first order in $\varepsilon$,
whereas the symmetric directions in \eqref{eq:symmetric-direction} appear only
at second order.  Shrinking the conjugation loops therefore makes this
displayed spanning family increasingly ill-conditioned.  The argument remains
exact, but it provides neither a uniform inverse-function radius nor a stable
parameter-recovery scheme.
\end{remark}

\section{Local exact reachability and open constraints}

The matrix directions found above act with full rank on every affinely spanning
configuration.

\begin{lemma}[Full-rank action]\label{lem:full-rank-action}
For every $P\in\cM_{d,n}$,
\begin{equation}\label{eq:Pg-full}
  \{PY:Y\in\fg\}=M_{d,n}(\R).
\end{equation}
\end{lemma}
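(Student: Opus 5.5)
We need to show that for $P\in\cM_{d,n}$ every $d\times n$ matrix $Z$ has the form $PY$ with $Y\in\fg$, that is, with $\one^{\mathsf T}Y=0$. The plan is to solve for $Y$ column by column. Right multiplication acts on each column separately, so $PY=Z$ and $\one^{\mathsf T}Y=0$ together say $\widehat P\,Y=\widehat Z$, where
\[
  \widehat Z=\begin{pmatrix} Z\\ 0\end{pmatrix}.
\]
Since $P\in\cM_{d,n}$, the $(d+1)\times n$ matrix $\widehat P$ has rank $d+1$, i.e.\ full row rank. Hence the linear map $\R^n\to\R^{d+1}$, $y\mapsto\widehat P y$, is surjective.

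For each column $z_k$ of $Z$ we therefore choose $y_k\in\R^n$ with
\[
  \widehat P y_k=\begin{pmatrix} z_k\\ 0\end{pmatrix}.
\]
The last row of this equation says $\one^{\mathsf T}y_k=0$, so $y_k\in W$, and the first $d$ rows say $Py_k=z_k$. Assembling the columns, $Y=(y_1,\ldots,y_n)$ satisfies $\one^{\mathsf T}Y=0$, so $Y\in\fg$, and $PY=Z$. This proves the inclusion $\supseteq$ in \eqref{eq:Pg-full}. The reverse inclusion is trivial, since every $PY$ is a $d\times n$ matrix.

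An explicit right inverse is $Y=\widehat P^{\mathsf T}(\widehat P\widehat P^{\mathsf T})^{-1}\widehat Z$. It is well defined because full row rank makes $\widehat P\widehat P^{\mathsf T}$ invertible.

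There is no real obstacle here. The only point needing care is the translation: the constraint $\one^{\mathsf T}Y=0$ becomes a zero last row on the right-hand side, so the statement reduces to the surjectivity of $\widehat P$, which is exactly the defining rank condition of $\cM_{d,n}$. A dimension count confirms consistency: $\dim\fg=n(n-1)\ge dn$ because $n\ge d+1$.
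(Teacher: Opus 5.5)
Your proof is correct and is essentially the paper's argument: full row rank of $\widehat P$ gives, for each column $z_k$ of $Z$, a preimage $y_k$ of $(z_k,0)^{\mathsf T}$, which automatically lies in $W$, and assembling these columns yields $Y\in\fg$ with $PY=Z$. The explicit right-inverse formula $\widehat P^{\mathsf T}(\widehat P\widehat P^{\mathsf T})^{-1}$ and the dimension count are correct extras that the paper does not need.
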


\begin{proof}
Because $\widehat P:\R^n\to\R^{d+1}$ is surjective, for every $z\in\R^d$
there exists $y\in\R^n$ such that
\[
  \widehat P y=
  \begin{pmatrix}z\\0\end{pmatrix}.
\]
Thus $y\in W$ and $Py=z$; the restriction $P|_W:W\to\R^d$ is surjective.
Given an arbitrary matrix $Z=(z_1,\ldots,z_n)\in M_{d,n}(\R)$, choose
$y_j\in W$ with $Py_j=z_j$, and let $Y$ have columns $y_1,\ldots,y_n$.
Then $Y\in\fg$ and $PY=Z$.
\end{proof}

The following proposition gives the precise local statement, including control
of all intermediate positions.

\begin{proposition}[Exact local reachability inside prescribed neighborhoods]
\label{prop:small-local}
Let $P\in\cM_{d,n}$, and let $U$ be any open neighborhood of $P$ in
$\cM_{d,n}$.  There is an open neighborhood $V$ of $P$ such that every
$Q\in V$ is reachable from $P$ by finitely many positive vertex--centroid
moves whose entire execution lies in $U$.
\end{proposition}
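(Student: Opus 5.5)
We build a finite-word endpoint map from the spanning curves of Lemma~\ref{lem:finite-word-spanning}, take its differential at the origin, and apply the inverse function theorem. The main obstacle is keeping every intermediate state inside $U$; the convention at coincidence turns out to cost nothing.

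\textbf{Choosing the words.} By Lemma~\ref{lem:finite-word-spanning}, for any fixed $a>0$, $a\ne1$, the derivatives at $t=0$ of the curves $A_i(e^t)$ and $C^{\pm}_{ij}(t;a)$ span $\fg$. Enumerate these curves as $\gamma_1,\dots,\gamma_N$, each a word of one or three positive move matrices depending smoothly on $t$, with $\gamma_k(0)=I$. Their derivatives are $Y_k=\gamma_k'(0)$. By Lemma~\ref{lem:full-rank-action}, the matrices $PY_k$ span $M_{d,n}(\R)$. Choose indices $k_1,\dots,k_{dn}$ so that $PY_{k_1},\dots,PY_{k_{dn}}$ form a basis.

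\textbf{The endpoint map.} Define
\[
  \Phi(t_1,\dots,t_{dn})=P\,\gamma_{k_1}(t_1)\gamma_{k_2}(t_2)\cdots\gamma_{k_{dn}}(t_{dn}).
\]
This map is smooth near $0$ with $\Phi(0)=P$. Its differential at $0$ is $(s_\ell)\mapsto\sum_\ell s_\ell PY_{k_\ell}$, which is invertible. By the inverse function theorem there are neighborhoods $T$ of $0$ and $V$ of $P$ such that $\Phi\colon T\to V$ is a diffeomorphism. Every $Q\in V$ is therefore $\Phi(t)$ for some $t\in T$, that is, $P$ times a finite product of positive move matrices.

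\textbf{Staying in $U$.} Every intermediate state and every point of the continuous executions must lie in $U$. Consider the map
\[
  (t,s)\mapsto P\,\gamma_{k_1}(t_1)\cdots\gamma_{k_{\ell-1}}(t_{\ell-1})\,(\text{partial product of the factors of }\gamma_{k_\ell}(t_\ell))\,A(\theta(s)),
\]
where $A(\theta(s))$ is the straight execution \eqref{eq:matrix-execution} of the current factor. This map is jointly continuous on $\overline{T'}\times[0,1]$ for a closed ball $\overline{T'}\subset T$ around $0$. Each parameter $\theta$ lies between $1$ and the endpoint parameter $e^{t}$, $a$, or $a^{-1}$.

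Here the fixed $a$ causes a problem. The outer factors $A_i(a)$ and $A_i(a^{-1})$ do not shrink as $t\to0$, so the intermediate state $PA_i(a)$ may leave $U$.

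The remedy uses the freedom noted in Remark~\ref{rem:no-commutator-limit}. Pick $a$ so close to $1$ that the whole compact family of paths $PA_i(\theta)$, for $\theta$ between $1$ and $a^{\pm1}$, lies in $U$ with some margin. This is possible by continuity and compactness, since $U$ is open and contains $P$. Lemma~\ref{lem:finite-word-spanning} holds for this $a$, so the construction above goes through unchanged.

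With $a$ fixed, all intermediate paths at $t=0$ lie in a compact subset $K\subset U$: they are just the loops $PA_i(\theta)$ followed by their reversals. By uniform continuity on $\overline{T'}\times[0,1]$, shrink $T'$ so that every path stays within $\operatorname{dist}(K,U^c)$ of $K$, hence inside $U$. Then replace $V$ by $\Phi(T')$, which is open by the inverse function theorem.

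\textbf{Coincidences.} If some factor is applied at a state $R$ with $p_i=c_i(R)$, then $RA_i(\lambda)=R$. Its execution path is constant, so it stays in $U$, and by \eqref{eq:omit-coincident-factor} omitting it changes no later state. All remaining factors are nontrivial positive moves with $\lambda>0$, as required.
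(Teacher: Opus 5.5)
Your proposal is correct and follows essentially the same route as the paper. You take $a$ close to $1$ so that the $t=0$ executions (the segments $PA_i(\theta)$ traversed out and back) stay in $U$. You then pick $dn$ word curves whose derivatives, applied to $P$, form a basis, apply the inverse function theorem to the endpoint map, shrink the parameter neighborhood by continuity on the compact set $\overline{T'}\times[0,1]$ so every intermediate state stays in $U$, and drop coincident factors without changing later states. The only differences are cosmetic: you use a compact set $K$ and its distance to the complement of $U$ where the paper uses the balls $B_{\rho/2}(P)\subset B_\rho(P)$, and you fix $a$ after setting up the endpoint map rather than before.
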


\begin{proof}
Since $\cM_{d,n}$ is open in $M_{d,n}(\R)$ and $U$ is a neighborhood of
$P$ relative to $\cM_{d,n}$, choose $\rho>0$ such that the Euclidean ball
$B_\rho(P)$ in $M_{d,n}(\R)$ is contained in $U$.  Consider all word curves
appearing in Lemma~\ref{lem:finite-word-spanning}.  At $t=0$, the one-move words
remain at $P$, while the execution paths of the three-move words converge
uniformly to the constant path at $P$ as $a\to1$.
Because there are only finitely many words, we may choose $a>0$, $a\ne1$,
sufficiently close to $1$ that all of these executions lie in
$B_{\rho/2}(P)$.

By Lemma~\ref{lem:finite-word-spanning} and
Lemma~\ref{lem:full-rank-action}, the vectors
\[
  P C'(0),
\]
where $C$ ranges over the chosen finite family of word curves, span
$M_{d,n}(\R)$.  Select $dn$ of the curves, say
$C_1,\ldots,C_{dn}$, so that
\begin{equation}\label{eq:selected-basis}
  P C_1'(0),\ldots,P C_{dn}'(0)
\end{equation}
form a basis of $M_{d,n}(\R)$.  Define the endpoint map
\begin{equation}\label{eq:endpoint-map}
  \Phi(t_1,\ldots,t_{dn})
  =P C_1(t_1)\dotsm C_{dn}(t_{dn}).
\end{equation}
Every factor has positive determinant, so $\Phi$ takes values in
$\cM_{d,n}$.  Every $C_k(0)$ is the identity.  Hence
\begin{equation}\label{eq:endpoint-derivative}
  D\Phi(0)(s_1,\ldots,s_{dn})
  =\sum_{k=1}^{dn}s_k P C_k'(0),
\end{equation}
which is an isomorphism by \eqref{eq:selected-basis}.

We also need the constituent moves, not only the endpoints, to remain in $U$.
Expand \eqref{eq:endpoint-map} into move factors
$B_1(t),\ldots,B_m(t)$, where $t=(t_1,\ldots,t_{dn})$ and
$B_r(t)=A_{i_r}(\lambda_r(t))$.  During the execution of factor $r$, the state
is
\[
  P B_1(t)\dotsm B_{r-1}(t)
    A_{i_r}\bigl((1-s)+s\lambda_r(t)\bigr),
  \qquad 0\le s\le1.
\]
These finitely many partial-state maps are continuous in $(t,s)$.  At $t=0$
their images lie in $B_{\rho/2}(P)$ by the choice of $a$.  For each map,
compactness of $[0,1]$ gives a neighborhood of the origin whose product with
$[0,1]$ is mapped into $B_\rho(P)$.  Since there are finitely many factor
positions, a single $\delta>0$ works for all of them: every intermediate
configuration lies in $B_\rho(P)$ whenever $|t_k|<\delta$ for every $k$.

The inverse function theorem applied to \eqref{eq:endpoint-map} now
gives an open neighborhood $O$ of the origin, which we may take inside
  $\{t\in\R^{dn}: |t_k|<\delta\ \text{for all }k\}$, and an open neighborhood
$V$ of $P$ such that
\[
  V\subseteq\Phi(O).
\]
For $Q\in V$, choose $t\in O$ with $\Phi(t)=Q$ and execute the corresponding
finite word.  Every parameter is positive: the variable factors have parameter
$e^{t_k}$, and the conjugating factors have parameters $a$ or $a^{-1}$.
Every intermediate configuration lies in $U$.

If a factor is encountered when the selected vertex coincides with the
centroid of the other vertices, that factor fixes the current configuration
and is omitted.  Equation~\eqref{eq:omit-coincident-factor} shows that the
entire remaining execution, and hence the endpoint, is unchanged.  All
remaining factors are legal positive moves under the coincidence convention,
so they still end at $Q$.
\end{proof}

\begin{proof}[Proof of Theorem~\ref{thm:open-constraints}]
For $P,Q\in\Omega$, write $P\sim Q$ when a finite sequence of positive moves
joins $P$ to $Q$ and every move is executed inside $\Omega$.  This is an
equivalence relation.  Reflexivity and transitivity are immediate.  Symmetry
follows because the inverse of a positive move with parameter $\lambda$ has
parameter $\lambda^{-1}>0$ and traverses the same configurations in reverse.

Proposition~\ref{prop:small-local}, applied with $U=\Omega$, shows that every
equivalence class contains a neighborhood of each of its points.  Thus the
classes form a partition of $\Omega$ into open sets.  If more than one class
existed, any one of them and the union of the others would form a separation of
$\Omega$.  Connectedness therefore leaves a single class, and every pair of
points in $\Omega$ is reachable as claimed.
\end{proof}

\section{Full-dimensional configurations}

We now determine the connected components of $\cM_{d,n}$, allowing
Theorem~\ref{thm:open-constraints} to be applied to the full-dimensional
configuration space.

Define
\begin{equation}\label{eq:G-def}
  G=\{A\in\mathrm{GL}_n(\R):\one^{\mathsf T}A=\one^{\mathsf T}\},
  \qquad
  G^+=\{A\in G:\det A>0\}.
\end{equation}
The group $G^+$ acts on $\cM_{d,n}$ by right multiplication.

\begin{lemma}[Connectedness of $G^+$]\label{lem:G-connected}
For $n\ge2$, the matrix group $G^+$ is path connected.
\end{lemma}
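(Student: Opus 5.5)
Our plan is to identify $G$ with an affine group of one lower dimension, and then reduce to the classical fact that $\mathrm{GL}^+_{n-1}(\R)$ is path connected.

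Choose the basis $f_1=e_1,\;f_k=e_k-e_1$ for $2\le k\le n$. Then $W=\ker(\one^{\mathsf T})$ is spanned by $f_2,\ldots,f_n$, and $\one^{\mathsf T}f_1=1$. Let $S$ be the change-of-basis matrix with these columns. The condition $\one^{\mathsf T}A=\one^{\mathsf T}$ says exactly that the row vector $\one^{\mathsf T}$ is invariant under right multiplication by $A$. Equivalently, the transpose $A^{\mathsf T}$ fixes $\one$, or the dual map preserves the functional $\one^{\mathsf T}$.

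It is cleaner to work directly with $A$. The condition implies $A(W)\subseteq W$, because $\one^{\mathsf T}Ay=\one^{\mathsf T}y=0$ for $y\in W$. It also implies $\one^{\mathsf T}Ae_1=1$. So in the basis $f_1,\ldots,f_n$ the matrix $S^{-1}AS$ is block upper triangular,
\[
S^{-1}AS=\begin{pmatrix}1&0\\ b&B\end{pmatrix}.
\]
Here the first column records $Af_1=f_1+(\text{element of }W)$, and $B\in\mathrm{GL}_{n-1}(\R)$ is the restriction of $A$ to $W$. The top-left entry is $1$ because the $f_1$-coordinate of a vector $v$ is $\one^{\mathsf T}v$, and $\one^{\mathsf T}Af_1=1$. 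The top-right block is zero because $A(W)\subseteq W$.

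Conversely, every matrix of this form, with $b\in\R^{n-1}$ arbitrary and $B$ invertible, conjugates back to an element of $G$. Moreover $\det A=\det B$. This gives a homeomorphism
\[
G^+\cong \R^{n-1}\times \mathrm{GL}^+_{n-1}(\R).
\]

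Path connectivity of $\mathrm{GL}^+_{m}(\R)$, for $m\ge1$, is standard. To keep the paper self-contained, we sketch it. Write $B=QR$ by Gram--Schmidt, with $R$ upper triangular with positive diagonal and $Q\in SO(m)$, since $\det B>0$. The path $R_s=(1-s)R+sI$ stays upper triangular with positive diagonal, so it stays invertible, and it joins $R$ to $I$. It remains to join $Q$ to $I$ inside $SO(m)$. Put $Q$ in real canonical form as a block-diagonal matrix of $2\times2$ rotations and $\pm1$'s, with an even number of $-1$'s. Pair the $-1$'s into rotations by $\pi$, and then scale all rotation angles to $0$.

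Finally, for $n=2$ we have $m=1$, and $\mathrm{GL}^+_1(\R)=(0,\infty)$ is trivially connected. A path in $G^+$ is then obtained by linearly contracting $b$ to $0$ while moving $B$ along its path, and conjugating back by $S$.

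The only step needing care is the block decomposition: in particular, checking that the top-left entry is forced to equal $1$. The remaining steps are routine.
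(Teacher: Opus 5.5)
Your proof is correct and follows essentially the same route as the paper. Both arguments use $A(W)\subseteq W$ and $\one^{\mathsf T}A=\one^{\mathsf T}$ to put $A$ in block-triangular form, identifying $G^+$ with $\mathrm{GL}^+(W)\ltimes W$, then shrink the translation part linearly and join the linear part to the identity by a standard factorization; the differences are cosmetic: you use the basis $e_1,\,e_k-e_1$, a QR (Gram--Schmidt) factorization, and an explicit canonical-form proof that $SO(m)$ is connected, whereas the paper uses the splitting $W\oplus\R\one$, polar decomposition, and takes connectedness of $SO(k)$ as known.
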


\begin{proof}
Let $u=\one/n$ and $W=\ker(\one^{\mathsf T})$, so
$\R^n=W\oplus\R u$.  The identity $\one^{\mathsf T}A=\one^{\mathsf T}$
implies $A(W)\subseteq W$ and $Au-u\in W$.  Hence every $A\in G$ has block
form
\[
  A\sim
  \begin{pmatrix}
    B & b\\
    0 & 1
  \end{pmatrix},
  \qquad
  B\in\mathrm{GL}(W),\quad b\in W,
\]
and $\det A=\det B$.  Thus $G^+$ is identified with
$\mathrm{GL}^+(W)\ltimes W$.

For $B\in\mathrm{GL}^+(W)$, polar decomposition gives $B=RS$, where
$R\in\mathrm{SO}(W)$ and $S$ is positive definite.  The group
$\mathrm{SO}(W)$ is the one-point group when $\dim W=1$, and
$\mathrm{SO}(k)$ is path connected for every $k\ge2$.  The cone of
positive-definite operators is convex.  If $R_t$ joins $I$ to $R$, then
\[
  B_t=R_t\bigl((1-t)I+tS\bigr),
  \qquad
  b_t=tb
\]
gives a path from the identity block matrix to $A$ inside $G^+$.
\end{proof}

\begin{lemma}[Target matrix]\label{lem:target-matrix}
Let $d\ge1$, $n\ge d+1$, and let $P,Q\in\cM_{d,n}$.
\begin{enumerate}[label=\textup{(\roman*)}]
  \item If $n\ge d+2$, there exists $A\in G^+$ such that
        $\widehat P A=\widehat Q$.
  \item If $n=d+1$, such an $A$ exists if and only if
        $\operatorname{or}(P)=\operatorname{or}(Q)$.
\end{enumerate}
\end{lemma}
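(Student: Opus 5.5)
We look for $A$ with $\widehat P A=\widehat Q$. Note first that any such $A$ automatically satisfies $\one^{\mathsf T}A=\one^{\mathsf T}$, because the last row of $\widehat P$ and of $\widehat Q$ is $\one^{\mathsf T}$. So the conditions defining $G^+$ reduce to two: $A$ must be invertible and $\det A>0$. Both $\widehat P$ and $\widehat Q$ are surjective $(d+1)\times n$ matrices. The plan is to decompose $\R^n=K\oplus L$, where $K=\ker\widehat P$ has dimension $n-d-1$ and $L$ is a complement.

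First choose any linear map $A$ with $\widehat P A=\widehat Q$. Pick bases so the problem is concrete. Since $\widehat P$ and $\widehat Q$ have full row rank, there are invertible $S,T\in\mathrm{GL}_n(\R)$ with
\[
\widehat P S=\widehat Q T=\begin{pmatrix}I_{d+1}&0\end{pmatrix}.
\]
Then $A=S\,T^{-1}$ works, since $\widehat P S T^{-1}=(I\ 0)T^{-1}=\widehat Q$. This $A$ is invertible.

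More generally, $A=S M T^{-1}$ works for any $M$ of the form
\[
M=\begin{pmatrix}I_{d+1}&0\\ *&N\end{pmatrix}, \qquad N\in\mathrm{GL}_{n-d-1}(\R).
\]
For such $M$ we have $(I\ 0)M=(I\ 0)$, and $\det A=\det S\det N/\det T$.

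\textbf{Case (i), $n\ge d+2$.} The block $N$ has size at least $1$, so we may choose $\det N$ of either sign and obtain $\det A>0$.

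\textbf{Case (ii), $n=d+1$.} Here $\widehat P$ and $\widehat Q$ are square and invertible, so $A=\widehat P^{-1}\widehat Q$ is forced and $\det A=\det\widehat Q/\det\widehat P$. The remaining step is to identify $\operatorname{sgn}\det\widehat P$ with $\pm\operatorname{or}(P)$, with a sign depending only on $d$. Subtract the first column of $\widehat P$ from the others. This leaves the matrix with first column $(p_1,1)$ and remaining columns $(p_j-p_1,0)$. Expanding along the last row then gives
\[
\det\widehat P=(-1)^{d}\det(p_2-p_1,\ldots,p_{d+1}-p_1).
\]
Hence $\det A>0$ if and only if $\operatorname{or}(P)=\operatorname{or}(Q)$, which proves both directions of (ii).

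The only place needing care is this sign bookkeeping in (ii), specifically checking that the sign factor is the same for $P$ and $Q$. Everything else is routine linear algebra.
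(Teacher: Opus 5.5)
Your proof is correct and is essentially the paper's argument. The paper builds $A=L^{\mathsf T}$ by extending the isomorphism $r_k\mapsto s_k$, $\one\mapsto\one$ between the row spaces of $\widehat P$ and $\widehat Q$ to an invertible $L$; this is the transposed form of your normalization $\widehat PS=\widehat QT=(I_{d+1}\ 0)$, and the paper likewise flips the sign of $\det A$ using one complementary direction when $n\ge d+2$. In the square case it does exactly what you do: $A$ is forced, $\det A=\det\widehat Q/\det\widehat P$, and the same column subtraction and last-row expansion give $\det\widehat P=(-1)^d\det(p_2-p_1,\ldots,p_{d+1}-p_1)$.
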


\begin{proof}
Let $r_1,\ldots,r_d,\one\in\R^n$ be the transposes of the rows of
$\widehat P$, and let $s_1,\ldots,s_d,\one$ be the corresponding vectors for
$\widehat Q$.  Both ordered $(d+1)$-tuples are linearly independent.  The
assignment
\[
  r_k\longmapsto s_k \quad(1\le k\le d),
  \qquad
  \one\longmapsto\one
\]
defines an isomorphism between their spans.  Extend it to an invertible linear
map $L:\R^n\to\R^n$, and set $A=L^{\mathsf T}$.  Then
\[
  \widehat P A=\widehat Q,
  \qquad
  \one^{\mathsf T}A=\one^{\mathsf T},
\]
so $A\in G$.

If $n\ge d+2$, there is at least one complementary basis vector.  If the chosen
extension has negative determinant, negate the image of one complementary
basis vector.  This preserves all prescribed images and reverses the
determinant sign.  Hence $L$, and therefore $A$, may be chosen with positive
determinant.

If $n=d+1$, no extension freedom remains and $A$ is unique.  Subtracting the
first column of $\widehat P$ from all later columns and expanding along the
last row gives
\[
  \det\widehat P
  ={(-1)}^d\det(p_2-p_1,\ldots,p_{d+1}-p_1),
\]
and similarly for $Q$.  Taking determinants in
$\widehat P A=\widehat Q$ yields
\[
  \det A=\frac{\det\widehat Q}{\det\widehat P},
\]
which is positive exactly when the orientations agree.
\end{proof}

\begin{proposition}[Connected components of $\cM_{d,n}$]
\label{prop:M-components}
Let $d\ge1$ and $n\ge d+1$.  If $n\ge d+2$, the space $\cM_{d,n}$ is path
connected.  If $n=d+1$, its two orientation classes are precisely its connected
components, and each is path connected.
\end{proposition}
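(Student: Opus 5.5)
We will realize $\cM_{d,n}$ as a union of orbits of the path-connected group $G^+$ and then separate the two orientation classes in the simplex case. The key observation is that for $A\in G$ we have $\widehat{PA}=\widehat P A$, because $\one^{\mathsf T}A=\one^{\mathsf T}$. Hence $\rank\widehat{PA}=\rank\widehat P$, so $G^+$ preserves $\cM_{d,n}$, and the orbit map $A\mapsto PA$ is continuous.

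\emph{Case $n\ge d+2$.} Given $P,Q\in\cM_{d,n}$, Lemma~\ref{lem:target-matrix}(i) supplies $A\in G^+$ with $\widehat PA=\widehat Q$; comparing the first $d$ rows gives $PA=Q$. By Lemma~\ref{lem:G-connected} there is a path $A_t$ in $G^+$ from $I$ to $A$. Then $t\mapsto PA_t$ is a path from $P$ to $Q$, and it lies in $\cM_{d,n}$ by the rank observation. So $\cM_{d,n}$ is path connected.

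\emph{Case $n=d+1$.} Now $\widehat P$ is a square matrix, and membership in $\cM_{d,n}$ means $\det\widehat P\ne0$. The proof of Lemma~\ref{lem:target-matrix} gives
\[
  \det\widehat P={(-1)}^d\det(p_2-p_1,\ldots,p_{d+1}-p_1),
\]
so $\operatorname{or}(P)={(-1)}^d\operatorname{sgn}\det\widehat P$. This is a continuous function from $\cM_{d,n}$ to $\{\pm1\}$. Each orientation class is therefore open and closed in $\cM_{d,n}$, and both classes are nonempty: swapping $p_1$ and $p_2$ reverses the sign.

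Within one class we argue exactly as in the first case. If $\operatorname{or}(P)=\operatorname{or}(Q)$, Lemma~\ref{lem:target-matrix}(ii) gives $A\in G^+$ with $PA=Q$, and a path in $G^+$ from $I$ to $A$ yields a path from $P$ to $Q$ inside $\cM_{d,n}$. So each class is path connected. Since the two classes are disjoint, nonempty, clopen and connected, they are exactly the connected components.

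No step is a real obstacle. The only point needing care is that the path $PA_t$ stays in $\cM_{d,n}$, which the identity $\widehat{PA}=\widehat PA$ settles.
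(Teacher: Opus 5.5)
Your proof is correct and follows essentially the same route as the paper. You join $P$ to $Q$ by a path $PA_t$, with $A_t$ in $G^+$ given by Lemma~\ref{lem:target-matrix} and Lemma~\ref{lem:G-connected}, and $\widehat{PA_t}=\widehat P A_t$ keeps the path in $\cM_{d,n}$; for $n=d+1$ you separate the two classes by the continuous nonvanishing determinant, and your only cosmetic difference is showing both classes are nonempty by swapping $p_1$ and $p_2$ instead of reflecting a standard simplex.
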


\begin{proof}
In each affirmative case of Lemma~\ref{lem:target-matrix}, choose
$A\in G^+$ with $\widehat P A=\widehat Q$.  By
Lemma~\ref{lem:G-connected}, join $I$ to $A$ by a path $A_t$ in $G^+$.  Then
$P A_t$ is a path in $\cM_{d,n}$ from $P$ to $Q$, because
\[
  \widehat{P A_t}=\widehat P A_t
\]
has constant rank $d+1$.  This proves path connectedness when $n\ge d+2$ and
path connectedness of each fixed-orientation set when $n=d+1$.

In the latter case, the determinant in \eqref{eq:simplex-orientation} is a
continuous nonzero function on $\cM_{d,d+1}$.  Its positive and negative sign
sets are disjoint clopen subsets, and both are nonempty (take a standard simplex
and its reflection).  Since each is path connected by the preceding
construction, they are precisely the two connected components.
\end{proof}

\begin{theorem}[Full-dimensional configurations]
\label{thm:full-dimensional}
Let $d\ge1$, $n\ge d+1$, and let $P,Q\in\cM_{d,n}$.
\begin{enumerate}[label=\textup{(\roman*)}]
  \item If $n\ge d+2$, a finite sequence of positive vertex--centroid moves
        transforms $P$ exactly into $Q$.
  \item If $n=d+1$, such a sequence exists if and only if the two labeled
        simplices have the same orientation.
\end{enumerate}
Every intermediate configuration, including every point traversed during a
move, remains in $\cM_{d,n}$.
\end{theorem}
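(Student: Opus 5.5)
The proof combines Theorem~\ref{thm:open-constraints} with Proposition~\ref{prop:M-components}; the only extra work is the necessity of matching orientations when $n=d+1$.

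For the affirmative directions, take $\Omega$ to be the connected component of $\cM_{d,n}$ that contains $P$. By Proposition~\ref{prop:M-components}, this component is all of $\cM_{d,n}$ when $n\ge d+2$. When $n=d+1$, it is the orientation class of $P$, which then also contains $Q$ under the hypothesis $\operatorname{or}(P)=\operatorname{or}(Q)$. Since $\cM_{d,n}$ is an open subset of $M_{d,n}(\R)$, it is locally connected, so its connected components are open. Hence $\Omega$ is a connected open subset of $\cM_{d,n}$ containing both $P$ and $Q$. Theorem~\ref{thm:open-constraints} then supplies finitely many positive moves from $P$ to $Q$ whose entire continuous execution stays in $\Omega\subseteq\cM_{d,n}$. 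This also gives the final sentence of the theorem.

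For necessity when $n=d+1$, suppose a finite sequence of positive moves joins $P$ to $Q$. I would first handle the coincidence convention: any omitted factor acts as the identity by \eqref{eq:omit-coincident-factor}, so we may write $Q=PA$ where $A$ is a product of matrices $A_i(\lambda)$ with every $\lambda>0$. By \eqref{eq:matrix-identities}, $A\in G$ and $\det A=\prod\lambda>0$. Since $\one^{\mathsf T}A=\one^{\mathsf T}$, we get $\widehat P A=\widehat Q$, so $A\in G^+$. The ``only if'' part of Lemma~\ref{lem:target-matrix}(ii) then forces $\operatorname{or}(P)=\operatorname{or}(Q)$.

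An alternative for necessity is continuity. By \eqref{eq:rank-preservation}, the executed path stays in $\cM_{d,d+1}$, and on that set the determinant in \eqref{eq:simplex-orientation} is continuous and never zero, so its sign cannot change along the path. Neither route should cause difficulty. The one point to check carefully is that the statement concerns moves executed under the convention: omitted coincident factors contribute nothing, and every executed factor has positive parameter, so the determinant computation applies without modification.
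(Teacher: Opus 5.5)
Your proposal is correct and follows essentially the paper's route: the affirmative cases apply Theorem~\ref{thm:open-constraints} to the relevant component of $\cM_{d,n}$ from Proposition~\ref{prop:M-components}. Necessity for $n=d+1$ rests on $\det A_i(\lambda)=\lambda>0$ preserving the sign of $\det\widehat P$, which you route through Lemma~\ref{lem:target-matrix}(ii) and the paper states directly; the paper also derives the final sentence straight from \eqref{eq:rank-preservation}, so it covers every positive sequence and not only the constructed ones, as your continuity remark already notes.
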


\begin{proof}[Proof of Theorem~\ref{thm:full-dimensional}]
When $n\ge d+2$, apply Theorem~\ref{thm:open-constraints} to the connected
open set $\Omega=\cM_{d,n}$, using Proposition~\ref{prop:M-components}.
When $n=d+1$ and the orientations agree, apply the same theorem to their common
orientation class, which is open and path connected by
Proposition~\ref{prop:M-components}.

Conversely, suppose $n=d+1$ and a positive move sequence transforms $P$ into
$Q$.  During a move with parameter $\lambda>0$, the augmented matrix is right
multiplied by $A_i(\lambda)$, whose determinant is positive.  Hence the sign of
$\det\widehat P$, and equivalently the simplex orientation, cannot change.
The rank-preservation statement follows from \eqref{eq:rank-preservation}.
\end{proof}

\begin{proof}[Proof of Corollary~\ref{cor:topp}]
Suppose first that $P$ is noncollinear.  Then $P\in\cM_{2,n}$.  If $n\ge4$,
apply part~(i) of Theorem~\ref{thm:full-dimensional} with any prescribed regular
$n$-gon as the target.  If $n=3$, choose a regular triangle with the orientation
of $P$ and apply part~(ii).  For a simple initial polygon and a target of the
same orientation, the stronger simplicity-preserving conclusion is
Theorem~\ref{thm:simple-main}, proved in the next section.  The omission
argument at the end of
Proposition~\ref{prop:small-local} shows that these constructions never require
a nontrivial move at coincidence.  Thus the affirmative conclusion is
independent of how that degenerate case is interpreted.

Conversely, under the coincidence convention above, each vertex--centroid move
sends a configuration $R$ to $R A_i(\lambda)$ for some $\lambda\in\R$.  Since
$\widehat{R A_i(\lambda)}=\widehat R A_i(\lambda)$, right multiplication cannot
increase rank and hence cannot increase affine dimension.  A regular $n$-gon
affinely spans $\R^2$, so any configuration that reaches one must be
noncollinear.
\end{proof}

\section{The topology of simple polygons}

For $n\ge3$ and $\sigma\in\{+1,-1\}$, let
\begin{equation}\label{eq:S-def}
  \cS_n^\sigma
  =\{P:P\text{ is a simple cyclically labeled $n$-gon and }
          \operatorname{or}(P)=\sigma\}.
\end{equation}
We prove directly that each $\cS_n^\sigma$ is a connected open subset of
$\cM_{2,n}$.

\begin{lemma}[Openness]\label{lem:simple-open}
For each $\sigma\in\{+1,-1\}$, the set $\cS_n^\sigma$ is open in
$\cM_{2,n}$.
\end{lemma}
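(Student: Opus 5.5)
We must show that a small perturbation of a simple cyclically labeled polygon $P$ stays simple and keeps the same orientation. Since $\mathcal A$ is continuous and nonzero on simple polygons, the orientation part follows once simplicity is shown to be an open condition. Indeed, if $Q$ near $P$ is simple, then $\operatorname{sgn}\mathcal A(Q)$ is defined. By continuity of $\mathcal A$ and $\mathcal A(P)\ne0$, it equals $\operatorname{sgn}\mathcal A(P)$ for $Q$ close enough. We also need $\cS_n^\sigma\subseteq\cM_{2,n}$. This holds because a simple polygon has nonzero area, so its vertices are noncollinear.

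So the core is openness of simplicity in $M_{2,n}(\R)$. The plan is a compactness and distance argument over pairs of edges. Write $e_k=[p_k,p_{k+1}]$.

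\textbf{Nonadjacent edges.} For nonadjacent edges $e_k,e_l$ of $P$, simplicity gives disjoint compact segments. Hence $\operatorname{dist}(e_k,e_l)>0$. The endpoints depend continuously on $P$, and segment distance is continuous in the endpoints. So the distance stays positive under small perturbations.

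\textbf{Adjacent edges.} For adjacent edges $e_{k-1},e_k$ meeting at $p_k$, we need $e_{k-1}\cap e_k=\{p_k\}$ to persist. The two edges fail this only if they overlap along a segment. That happens exactly when $p_{k-1},p_k,p_{k+1}$ are collinear with $p_k$ not strictly between the other two, or when some of these points coincide.

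For $P$ itself, either the three points are noncollinear, or $p_k$ lies strictly between its neighbors; the latter is the straight-angle case. In the noncollinear case, the condition is an open inequality on the determinant. In the straight-angle case, I will use the condition $\langle p_{k-1}-p_k,\,p_{k+1}-p_k\rangle<0$ together with $p_{k\pm1}\ne p_k$. This condition is open, and it implies that the two segments meet only at $p_k$ whether or not the perturbed points are collinear: the directions from $p_k$ form an obtuse angle, so the segments cannot share any other point. It suffices to impose this inner-product condition when collinear, and the determinant condition otherwise; each is open. The $n=3$ case is automatic from noncollinearity.

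Taking the finite intersection of these open neighborhoods over all edge pairs gives a neighborhood of $P$ consisting of simple polygons. The main care point is the adjacent-edge case with straight angles, handled as above. Distinctness of vertices is implied, since $p_k\ne p_{k\pm1}$ is open and nonadjacent vertices lie on disjoint edges.
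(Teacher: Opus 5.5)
Your proof is correct and follows the paper's argument. Nonadjacent edges keep a positive distance under small perturbations, and at each vertex the open condition $\det(u_k,v_k)\ne0$ or $u_k^{\mathsf T}v_k<0$, with both edge vectors nonzero, keeps adjacent edges meeting only at their shared vertex (you split this into cases according to $P$, while the paper states it as a single disjunction); continuity of the signed area then fixes the orientation, and deriving vertex distinctness from disjointness of nonadjacent edges is an equivalent substitute for the paper's separate vertex--nonincident-edge check.
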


\begin{proof}
Fix a simple polygon $P$ and write $E_i=[p_i,p_{i+1}]$, with indices understood
cyclically.  Its vertices are distinct and its edges have positive length.
Any two nonadjacent edges are disjoint compact sets and therefore have positive
distance; the same is true of every vertex and every nonincident edge.  There
are only finitely many such pairs, and a segment varies
continuously with its endpoints in the Hausdorff metric.  Hence all these
separation conditions persist under sufficiently small perturbations of the
vertices.

It remains to control the two edges incident to a common vertex.  At $p_i$ set
\[
  u_i=p_{i-1}-p_i,
  \qquad
  v_i=p_{i+1}-p_i.
\]
Since $u_i$ and $v_i$ are nonzero,
\begin{equation}\label{eq:adjacent-edge-open}
  [0,u_i]\cap[0,v_i]=\{0\}
  \quad\Longleftrightarrow\quad
  \det(u_i,v_i)\ne0
  \ \text{or}\
  u_i^{\mathsf T}v_i<0.
\end{equation}
Indeed, only the collinear case needs discussion.  There $v_i=c u_i$ for some
$c\ne0$; the two segments overlap away from the origin exactly when $c>0$,
whereas $c<0$ is equivalent to $u_i^{\mathsf T}v_i<0$.  The condition on the
right of \eqref{eq:adjacent-edge-open} is open, as is the condition that every
edge have positive length.  Thus the space of simple cyclically labeled
polygons is open in $M_{2,n}(\R)$.

Every simple polygon is noncollinear, so it belongs to $\cM_{2,n}$, and its
signed area is nonzero.  Continuity of \eqref{eq:signed-area} shows that fixing
its sign is also an open condition.  Therefore $\cS_n^\sigma$ is open in
$\cM_{2,n}$.
\end{proof}

Call a polygon \emph{in general position} if no three vertices are collinear.
The ear theorem guarantees a triangulation of every such simple polygon
\cite{Meisters1975}.  A leaf of the dual tree is a triangular face with two
consecutive boundary edges; call its middle vertex a \emph{clean ear}.

\begin{lemma}[Ear collapse and edge subdivision]\label{lem:ear-collapse}
Let $p_i$ be a clean ear of a general-position simple polygon $P$, and put
\[
  T=\operatorname{conv}\{p_{i-1},p_i,p_{i+1}\}.
\]
The complementary boundary chain meets $T$ only at $p_{i-1}$ and $p_{i+1}$.
Consequently, moving $p_i$ linearly to any point in the relative interior of
$[p_{i-1},p_{i+1}]$ preserves simplicity, including at the endpoint.

Conversely, if $R(t)$ is a continuous path of simple polygons and
$[u(t),v(t)]$ is one of its edges, inserting the midpoint
$(u(t)+v(t))/2$ produces a continuous path of simple polygons with one more
labeled vertex.  Moreover, if a polygon $Q^-$ is obtained from a strictly
convex polygon $Q$ by deleting $q_i$, then after inserting the midpoint of
$[q_{i-1},q_{i+1}]$ into $Q^-$, moving it linearly to $q_i$ preserves
simplicity.
\end{lemma}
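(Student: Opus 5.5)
The lemma has three assertions, and I would prove them in order.

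\emph{First assertion.} I would use the triangulation directly. Since $p_i$ is a clean ear, $T$ is a face of a triangulation of the region bounded by $P$, and the triangulation's faces have pairwise disjoint interiors. The diagonal $[p_{i-1},p_{i+1}]$ is an edge of this triangulation, so it lies in the closed region, meets the boundary only at its endpoints, and $T$ meets the rest of the region only along this diagonal.

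Now let $\Gamma$ be the complementary chain $p_{i+1}p_{i+2}\cdots p_{i-1}$. It lies on the boundary of the region, so a point of $\Gamma\cap T$ must lie on the boundary of $T$. It cannot lie in the interior of $T$, since such points are interior points of the region. The boundary of $T$ consists of the two ear edges and the diagonal. Simplicity of $P$ forbids $\Gamma$ from meeting the ear edges except at $p_{i\pm1}$. The diagonal meets the polygon boundary only at its endpoints. General position also excludes degenerate touching at vertices.

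For the motion, write $p_i(s)$ for a point on the segment from $p_i$ to a target $x$ in the relative interior of the diagonal. Since $T$ is convex, $p_i(s)\in T$, and the moving edges $[p_{i-1},p_i(s)]$ and $[p_i(s),p_{i+1}]$ lie in $T$.
\begin{itemize}
\item These moving edges can meet $\Gamma$ only at $p_{i\pm1}$, because of the previous paragraph.
\item The two moving edges meet each other only at $p_i(s)$. For $s<1$ they are nondegenerate sides of a triangle. At $s=1$ they are the two halves of the diagonal meeting at $x$, which is a straight angle and is allowed.
\item The edges of $\Gamma$ incident to $p_{i\pm1}$ meet the moving edges only at $p_{i\pm1}$. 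Any further intersection would have to lie in $T$ and hence at $p_{i\pm1}$; a collinear overlap there would put a point of $\Gamma$ other than $p_{i\pm1}$ inside $T$.
\end{itemize}

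\emph{Second assertion.} The new edges $[u,m]$ and $[m,v]$ have union $[u,v]$ and meet only at $m=(u+v)/2$, which is a straight angle and so allowed. Any intersection between a new edge and another edge or vertex would already be an intersection of $[u,v]$ with it. Hence simplicity of the subdivided polygon is equivalent to simplicity of $R(t)$, and continuity in $t$ is clear.

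\emph{Third assertion.} Strict convexity shows that $Q^-$ with the midpoint inserted is simple, with a straight angle at the midpoint. Moving the midpoint linearly to $q_i$ keeps the moving point in the triangle $\operatorname{conv}\{q_{i-1},q_i,q_{i+1}\}$. By strict convexity this triangle meets the convex hull of the other vertices only along the segment $[q_{i-1},q_{i+1}]$. Hence the rest of the boundary meets the triangle only at $q_{i\pm1}$, and the same argument as in the first assertion applies.

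The main obstacle is the first claim. It requires a careful justification that a dual-tree leaf triangle meets the remaining boundary only at two vertices. I would handle this using the interior-disjointness of the triangulation faces together with the fact that triangulation diagonals meet the boundary only at their endpoints.
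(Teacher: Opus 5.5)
Your proposal is correct and follows essentially the same route as the paper. You use the triangulation, together with general position, to show that the complementary chain meets $T$ only at $p_{i-1}$ and $p_{i+1}$, keep the moving edges inside $T$, handle subdivision by the same-union observation, and use strict convexity to separate the ear triangle from the rest of the boundary. Your argument that $\Gamma\cap T\subseteq\partial T$, because interior points of $T$ are interior points of the region, is a more explicit version of the paper's remark that a leaf triangle of the planar complex shares only its third side with the remaining triangles.
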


\begin{proof}
A straight-line triangulation is a planar complex.  A leaf triangle shares
only its third side with the remaining triangles, and general position excludes
any other boundary vertex from that side.  Hence the complementary boundary
chain meets the closed triangle only at the two endpoints of the third side.
During the collapse, the two moving incident edges stay in $T$ and meet each
other only at the moving vertex; at the endpoint they are the two subsegments
of the third side.  This proves the first assertion.

Subdividing an edge at its midpoint replaces one segment by two subsegments
with the same union, so the geometric image remains a straight-line embedding
at every $t$.  Finally, strict convexity places the deleted vertex and every
nonneighboring vertex in opposite open half-planes bounded by the neighbor
diagonal.  The moving incident edges stay in the corresponding ear triangle,
while the complementary chain stays in the opposite closed half-plane and
meets that triangle only at the diagonal endpoints.  This proves the final
assertion.
\end{proof}

\begin{proposition}[Path connectedness of an orientation class]
\label{prop:simple-connected}
For every $n\ge3$ and $\sigma\in\{+1,-1\}$, the set $\cS_n^\sigma$ is path
connected.
\end{proposition}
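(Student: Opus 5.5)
Fix $\sigma$. We will join every $P\in\cS_n^\sigma$ by a path inside $\cS_n^\sigma$ to one fixed strictly convex reference polygon $K_n^\sigma$, say the regular $n$-gon labeled so that its orientation is $\sigma$. Orientation is constant along any path of simple polygons, because the signed area is continuous and nonvanishing on simple polygons, so every path we build stays in $\cS_n^\sigma$ automatically. The argument is an induction on $n$.

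First we reduce to general position. By Lemma~\ref{lem:simple-open} the set $\cS_n^\sigma$ is open in $M_{2,n}(\R)$. The configurations in which some three vertices are collinear form a finite union of proper algebraic hypersurfaces. Hence a small ball around $P$ lies in $\cS_n^\sigma$ and contains a general-position polygon $P'$, and the straight segment from $P$ to $P'$ joins them inside $\cS_n^\sigma$. This step is what handles straight-angle vertices.

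\emph{Base case $n=3$.} Every noncollinear triangle is simple, so $\cS_3^\sigma$ is an orientation class of $\cM_{2,3}$. It is path connected by Proposition~\ref{prop:M-components}.

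\emph{Inductive step, $n\ge4$.} Assume $P$ is in general position. The ear theorem gives a triangulation, and since $n\ge4$ its dual tree has at least two leaves, so there is a clean ear $p_i$.
\begin{enumerate}[label=\textup{(\arabic*)}]
\item By Lemma~\ref{lem:ear-collapse}, slide $p_i$ linearly to the midpoint of $[p_{i-1},p_{i+1}]$; simplicity is preserved throughout.
\item The result is the polygon $P^-$, with $p_i$ deleted, subdivided at the midpoint of the edge $[p_{i-1},p_{i+1}]$. The polygon $P^-$ is a simple $(n-1)$-gon with the same signed area up to the ear triangle, and since $P^-$ is simple its orientation is still $\sigma$.
\item By induction, choose a path $R(t)$ in $\cS_{n-1}^\sigma$ from $P^-$ to $K^-$, where $K^-$ is $K_n^\sigma$ with its $i$th vertex deleted. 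Here $K^-$ is strictly convex with orientation $\sigma$, so it lies in $\cS_{n-1}^\sigma$. Although $K^-$ is not regular, the induction only needs path connectedness of $\cS_{n-1}^\sigma$, not a specific reference polygon.
\item Insert the moving midpoint of the edge $[R(t)_{i-1},R(t)_{i+1}]$. By the second assertion of Lemma~\ref{lem:ear-collapse}, this is a path of simple $n$-gons.
\item By the final assertion of Lemma~\ref{lem:ear-collapse}, move the inserted midpoint linearly to the $i$th vertex of $K_n^\sigma$.
\end{enumerate}
Concatenating these paths joins $P$ to $K_n^\sigma$. Labels must be tracked so that the deleted vertex keeps index $i$ and the cyclic order is respected. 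This works because the ear collapse and the midpoint insertion both act at the same label position.

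The main obstacle is making the induction's target consistent. The vertex deleted depends on $P$, so the intermediate $(n-1)$-gon reference must depend on $i$. We handle this by phrasing the induction hypothesis purely as path connectedness of $\cS_{n-1}^\sigma$, with no fixed reference polygon. A secondary check is that collapsing the ear cannot flip the orientation. This holds because $P^-$ is simple throughout the homotopy, so the sign of the signed area cannot pass through zero. Finally, the induction needs $n-1\ge3$, which holds for $n\ge4$.
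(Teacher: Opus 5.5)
Your proposal is correct and follows essentially the paper's own proof: perturb into general position, collapse a clean ear to the midpoint of its diagonal, apply induction to the resulting $(n-1)$-gon, lift the inductive path by midpoint insertion, and expand the ear to the strictly convex target. Phrasing the hypothesis as path connectedness of $\cS_{n-1}^\sigma$ and taking the base case from Proposition~\ref{prop:M-components} are equivalent to the paper's choices.

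The one step to state more cleanly is the orientation of $P^-$. Simplicity of $P^-$ alone does not fix its sign. What gives sign $\sigma$ is that the collapse path of simple $n$-gons keeps it, and that deleting the collinear midpoint vertex leaves the shoelace sum unchanged, since $\det(a,r)+\det(r,b)=\det(a,b)$ for $r=(a+b)/2$.
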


\begin{proof}
We prove the following slightly stronger statement by induction on $n$:

\smallskip
\emph{Every simple cyclically labeled $n$-gon $P$ can be joined through simple
polygons to every strictly convex cyclically labeled $n$-gon $Q$ of the same
orientation.}
\smallskip

For $n=3$, the unique affine map $T(x)=Bx+b$ taking $P$ to $Q$ has
$\det B>0$.  A path from $I$ to $B$ in $\mathrm{GL}^+(2,\R)$, obtained for
example by polar decomposition, together with the translation path $tb$, takes
$P$ to $Q$ through simple triangles.

Assume the statement for $n-1$, where $n\ge4$.  By openness, choose $\eta>0$
such that $B_\eta(P)\subset\cS_n^\sigma$.  General-position configurations are
dense because the finite union of the collinearity zero sets is nowhere dense.
Choose $P^*\in B_\eta(P)$ in general position.  The straight segment from $P$
to $P^*$ lies in this ball and hence in $\cS_n^\sigma$.

Choose a clean ear $p_i^*$ of $P^*$, and let
\[
  r_i=\frac12(p_{i-1}^*+p_{i+1}^*)
\]
be the midpoint of its diagonal.  Move $p_i^*$ linearly to $r_i$, keeping all
other vertices fixed.  Lemma~\ref{lem:ear-collapse} shows that the polygon
remains simple throughout and that the endpoint replaces the two incident
edges by the two subsegments of the diagonal
$[p_{i-1}^*,p_{i+1}^*]$.

Delete the subdividing vertex and replace its two incident subsegments by the
whole diagonal.  The result is a simple $(n-1)$-gon, denoted $P^-$.  The
collapse path stays simple, so its orientation is constant.  Moreover, if
$a=p_{i-1}^*$, $b=p_{i+1}^*$, and $r_i=(a+b)/2$, then
\[
  \det(a,r_i)+\det(r_i,b)=\det(a,b),
\]
so deleting the subdivision vertex leaves the shoelace sum unchanged.  Hence
$P^-$ has the orientation of $P$.

Delete the corresponding vertex $q_i$ from the strictly convex target $Q$.
The remaining vertices, in their inherited cyclic order, form a strictly convex
$(n-1)$-gon $Q^-$ with the same orientation as $Q$.  Relabel the surviving
vertices of $P^-$ and $Q^-$ by the same order-preserving bijection to
$\{1,\ldots,n-1\}$, so their replacement edges correspond.  The orientations
of $P^-$ and $Q^-$ agree, so the induction hypothesis gives a path $R(t)$,
$0\le t\le1$, of simple $(n-1)$-gons from $P^-$ to $Q^-$.  Along this path,
let $[u(t),v(t)]$ denote the edge corresponding to the collapsed diagonal
$[p_{i-1}^*,p_{i+1}^*]$.  Since every edge of a simple polygon has positive
length, its midpoint lies in its relative interior.
Lift $R(t)$ to $n$ vertices by inserting
\[
  p_i(t)=\frac12\bigl(u(t)+v(t)\bigr)
\]
on that edge.  Lemma~\ref{lem:ear-collapse} shows that the lifted path consists
entirely of simple $n$-gons.

At the end of the lifted path, the inserted vertex is the midpoint of
$[q_{i-1},q_{i+1}]$.  Move it linearly to $q_i$.
Lemma~\ref{lem:ear-collapse} shows that the polygon remains simple throughout.
Concatenating the perturbation, the ear collapse, the lifted inductive path,
and the ear expansion gives the required path from $P$ to $Q$.

This completes the induction.  Given two polygons in $\cS_n^\sigma$, join
each to the same regular $n$-gon of orientation $\sigma$ and reverse one of the
paths.  Hence $\cS_n^\sigma$ is path connected.
\end{proof}

\begin{remark}[Alternative topological input]\label{rem:planar-morphing}
The planar straight-line morphing theorem also applies to two drawings of the
same labeled cycle and outer face \cite{AngeliniEtAl2013}.  We retain the ear
proof because it is elementary and self-contained.
\end{remark}

\begin{proof}[Proof of Theorem~\ref{thm:simple-main}]
If a finite move sequence remains simple throughout, concatenating the move
paths gives a path of simple polygons from $P$ to $Q$.  The signed area is
continuous and nonzero along this path, so its sign, and hence the orientation,
cannot change.

Conversely, suppose the orientations agree and equal $\sigma$.  Then
$P,Q\in\cS_n^\sigma$, which is a connected open subset of $\cM_{2,n}$ by
Lemma~\ref{lem:simple-open} and Proposition~\ref{prop:simple-connected}.
Apply Theorem~\ref{thm:open-constraints} with
$\Omega=\cS_n^\sigma$.
\end{proof}

\section{Discussion}

The local-to-global principle applies to every connected open subset of
the affinely spanning configuration space.  It also suggests a heuristic
numerical scheme: choose an admissible path, subdivide it adaptively, and
at each step solve the local endpoint equations furnished by
Proposition~\ref{prop:small-local}.  Here exact reachability means literal
equality of the endpoints in the real configuration space, rather than
convergence to the target.  The proof is non-effective: it supplies no
computable neighborhood radii, move-count bounds, complexity estimates, or
stable parameter-recovery procedure, and the displayed finite-word
parametrization becomes ill-conditioned as $a\to1$. A practical implementation
would therefore require numerically robust root-finding and explicit
admissibility checks. The area-centroid variant mentioned in the TOPP entry lies
outside the present
rank-one-update framework, since the area centroid need not remain on a fixed
line while a vertex moves.

Hawley's manuscript contains a useful conditional two-move placement
mechanism.  If the reference vertex--centroid line meets transversely the line
joining another vertex to its target, a reference move places the centroid at
the intersection, after which the second vertex can be sent to its
target~\cite[pp.~4--6]{Hawley2014}.  The manuscript does not, however, prove
that an admissible transverse pair exists at every stage: its stated
noncollinearity condition still permits the two required lines to be distinct
and parallel.  This already leaves the global construction incomplete,
independently of any convention at coincidence.

In addition, the terminal construction sends the last vertex to the centroid
of the resulting configuration and then treats an arbitrary outgoing direction
as available~\cite[pp.~6--9]{Hawley2014}.  Under the coincidence convention
adopted here, that vertex is immobile.  The appended program likewise stops
after reporting the relevant distances and does not implement this final move.
Thus Hawley's argument supplies a valuable conditional mechanism, but not a
complete proof for all noncollinear inputs.  Our finite-positive-word argument
avoids both obstructions: local surjectivity holds at every affinely spanning
configuration without any nontrivial move from coincidence, and connectedness
globalizes the result within arbitrary connected open constraints, including
the space of simple polygons of a fixed orientation.

\paragraph{A planar illustration.}
Figure~\ref{fig:quadrilateral-trajectory} depicts one eight-move positive word
from an irregular labeled quadrilateral to a square. Its role is purely
illustrative: it shows how the active vertex and the relevant centroid change
from one factor to the next, and is neither used in the proof nor intended as an
implementation of Proposition~\ref{prop:small-local}.

\begin{figure}[p]
  \centering
  \includegraphics[width=\linewidth]{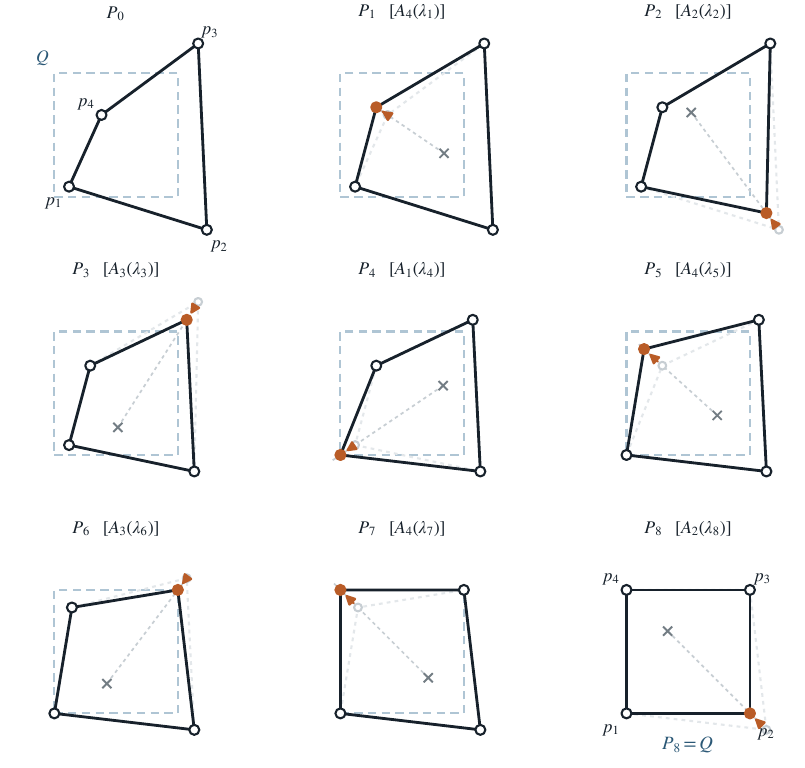}
  \caption{An illustrative eight-move positive word from an irregular labeled
  quadrilateral $P_0$ to the square $Q$. Panels $P_1,\ldots,P_8$ show successive
  active moves. Orange marks the active vertex and its displacement, pale gray
  the preceding configuration, the cross the centroid of the other three
  vertices, and the blue dashed outline the target.}
  \label{fig:quadrilateral-trajectory}
\end{figure}

\FloatBarrier
\appendix
\section{Exact generation of the positive move group}\label{app:move-group}

The finite-word calculation also gives an exact generation theorem.

\begin{theorem}[Positive move group]\label{thm:move-group}
For every $n\ge2$,
\begin{equation}\label{eq:group-generation}
  \bigl\langle A_i(\lambda):1\le i\le n,\ \lambda>0\bigr\rangle=G^+.
\end{equation}
No closure is taken on the left.
\end{theorem}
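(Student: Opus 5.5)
Let $H$ denote the subgroup on the left of \eqref{eq:group-generation}. One inclusion is immediate from \eqref{eq:matrix-identities}: each generator satisfies $\one^{\mathsf T}A_i(\lambda)=\one^{\mathsf T}$ and $\det A_i(\lambda)=\lambda>0$. Its inverse is $A_i(\lambda^{-1})$, also a generator. Hence $H\subseteq G^+$, and every element of $H$ is a finite positive word.

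For the reverse inclusion we show that $H$ contains an open neighborhood of $I$ in $G^+$. Then $H$ is an open subgroup of the path connected group $G^+$ (Lemma~\ref{lem:G-connected}). Its cosets are open and partition $G^+$, so $H$ is also closed, and therefore $H=G^+$. No closure is taken: we only use that an open subgroup is closed.

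To obtain the neighborhood, mimic Proposition~\ref{prop:small-local} at the level of matrices, with $P$ replaced by the identity. The set $G^+$ is an open subset of the affine space $I+\fg$, since $\one^{\mathsf T}A=\one^{\mathsf T}$ is equivalent to $A-I\in\fg$. Thus $G^+$ is a manifold of dimension $N=\dim\fg=n(n-1)$, with tangent space $\fg$ at $I$. Fix any $a>0$ with $a\ne1$. By Lemma~\ref{lem:finite-word-spanning}, the derivatives at $0$ of the finitely many curves $A_i(e^t)$ and $C^\pm_{ij}(t;a)$ span $\fg$. Choose $N$ of these curves, $C_1,\dots,C_N$, whose derivatives form a basis. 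Then define
\[
  \Psi(t_1,\dots,t_N)=C_1(t_1)\dotsm C_N(t_N)\in H\subseteq G^+ .
\]
This map is smooth, satisfies $\Psi(0)=I$, and has $D\Psi(0)(s)=\sum_k s_kC_k'(0)$, which is an isomorphism onto $\fg$. By the inverse function theorem, $\Psi$ maps a neighborhood of $0$ onto a neighborhood of $I$ in $G^+$, and this neighborhood lies in $H$.

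The case $n=2$ needs a separate check, because Lemma~\ref{lem:finite-word-spanning} was stated with pairs $i<j$. Its proof still works there: $\alpha=-1$, and the computation goes through unchanged. Alternatively, one can argue directly. Here $W$ is one-dimensional, so by the block form in Lemma~\ref{lem:G-connected} we have $G^+\cong\R_{>0}\ltimes\R$. The generators $A_1(\lambda)$ and $A_2(\mu)$ give two distinct one-parameter subgroups, and their products already fill a neighborhood of $I$.

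The main obstacle is being careful about the manifold structure: one must check that the word map lands in $G^+$ and that its differential is surjective onto the tangent space $\fg$ rather than onto all of $M_n(\R)$. This follows from Lemma~\ref{lem:finite-word-spanning} together with the affine description $G^+\subset I+\fg$. The rest is the standard open-subgroup argument.
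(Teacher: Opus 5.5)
Your proposal is correct and follows essentially the paper's own proof: the finite-word map $\Psi$ built from curves of Lemma~\ref{lem:finite-word-spanning}, the inverse function theorem in the affine space $I+\fg$ to place an identity neighborhood inside $H$, and connectedness of $G^+$ (Lemma~\ref{lem:G-connected}) to conclude $H=G^+$. Your explicit check of $H\subseteq G^+$ and of the case $n=2$ (where $\alpha=-1$) fills in details that the paper leaves implicit.
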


\begin{proof}
Let $H$ be the subgroup on the left and
$\cE=\{A\in M_n(\R):\one^{\mathsf T}A=\one^{\mathsf T}\}$.  Its translation
space is $\fg$, and $G^+$ is open in $\cE$.  Fix $a>0$, $a\ne1$, and choose
$N=\dim\fg$ curves from Lemma~\ref{lem:finite-word-spanning} whose derivatives
form a basis of $\fg$.  Each curve lies in $H$ and passes through $I$.  For
$\Psi(t)=C_1(t_1)\cdots C_N(t_N)$, the derivative $D\Psi(0)$ is an isomorphism
onto $\fg$.  The inverse function theorem gives an identity neighborhood in
$H$, so $H$ is open in $G^+$.  Connectedness of $G^+$ now gives $H=G^+$.
\end{proof}

\begin{remark}
With Lemma~\ref{lem:target-matrix}, this gives an endpoint-only proof of
Theorem~\ref{thm:full-dimensional}; the open-set proof also controls all
intermediate configurations.
\end{remark}

\section{Lie-theoretic interpretation}\label{app:lie}

Since $A_i(e^t)=\exp(tX_i)$, the matrices $X_i$ are the infinitesimal
generators of the positive moves.  For $i\ne j$, put
\[
  F_{ij}=w_i e_j^{\mathsf T},
  \qquad
  \alpha=e_i^{\mathsf T}w_j=-\frac1{n-1}.
\]
Direct multiplication gives
\begin{equation}\label{eq:lie-first}
  [X_i,X_j]=\alpha(F_{ij}-F_{ji}).
\end{equation}
Set $K_{ij}=\alpha^{-1}[X_i,X_j]=F_{ij}-F_{ji}$.  A second calculation yields
\begin{equation}\label{eq:lie-second}
  [X_i,K_{ij}]=F_{ij}+F_{ji}-2\alpha F_{ii}.
\end{equation}
Consequently,
\begin{equation}\label{eq:lie-recover}
  F_{ij}
  =\frac12\bigl(K_{ij}+[X_i,K_{ij}]+2\alpha X_i\bigr).
\end{equation}
The vectors $w_1,\ldots,w_n$ span $W=\ker(\one^{\mathsf T})$, so the matrices
$F_{ij}$ span all matrices whose columns lie in $W$, namely $\fg$.  Therefore
\begin{equation}\label{eq:lie-full-final}
  \Lie(X_1,\ldots,X_n)=\fg.
\end{equation}

This is the infinitesimal content of the finite conjugation words.  Writing
$a=e^s$ gives
\[
  {\left.\frac{d}{dt}\right|}_{t=0}
  A_i(a)A_j(e^t)A_i(a^{-1})
  =\Ad_{\exp(sX_i)}(X_j).
\]
Differentiation in $s$ gives $[X_i,X_j]$, while the two finite conjugates with
parameters $s$ and $-s$ yield the symmetric and antisymmetric directions in
\eqref{eq:antisymmetric-direction}--\eqref{eq:symmetric-direction}.

On configuration space, write $\mathcal Z_Y(P)=PY$ for $Y\in\fg$.  Then
\[
  [\mathcal Z_Y,\mathcal Z_Z]=\mathcal Z_{[Y,Z]}.
\]
Hence the left-invariant fields $A\mapsto AX_i$ and the induced fields
$\mathcal Z_{X_i}$ are bracket generating by \eqref{eq:lie-full-final} and
Lemma~\ref{lem:full-rank-action}.  Sussmann's orbit theorem makes their local
finite-flow orbits open \cite{Sussmann1973};
connectedness gives all of $G^+$ and, after restricting flows to a connected
open $\Omega$, all of $\Omega$.  The flow segments trace the straight move
paths up to reparametrization, and the fields vanish at coincidence points.
The main proof uses finite words instead to make exactness and
intermediate-state control explicit.

\end{document}